\newtheorem{lemma}{Lemma}
\newtheorem{teorema}{Theorem}
\newtheorem{corollario}{Corollary}
\def\be{\begin{equation}}
\def\ee{\end{equation}}
\def\bea{\begin{eqnarray}}
\def\eea{\end{eqnarray}}
\def\ni{\noindent}
\def\nn{\nonumber}
\def\dist{\operatorname{dist}}
\def\z{\zeta}
\def\f{\varphi}
\def\b{\beta}
\def\g{\gamma}
\def\e{\eta}
\def\d{\delta}
\def\L{\Lambda}
\def\r{\rho}
\def\bq{\bar q}
\def\RRR{\mathbb{R}}
\def\MM{{\bf M}}
\def\RRR{{\mathbb R}}
\def\EE{{\cal E}}
\def\AA{{\cal A}}
\def\HH{{\cal H}}
\def\MM{{\cal M}}
\def\0{\emptyset}
\def\L{\Lambda}
\def\n{\nu}
\def\b{\beta}
\def\m{\mu}
\def\d{\delta}
\def\g{\gamma}
\def\r{\rho}
\def\e{\varepsilon}
\def\G{\Gamma}
\def\f{\varphi}
\def\o{\omega}
\def\c{\chi}
\def\O{\Omega}
\def\ol{\overline}
\title{On the stationary BBGKY hierarchy for equilibrium states}
\author{Giuseppe Genovese, Sergio Simonella\\  \\ \textit{Dipartimento di Matematica, Sapienza Universit\`a di Roma}\\
\textit{Piazzale Aldo Moro, 2, 00185 Roma, Italia}}
\date{}   
\begin{document}

\maketitle

\begin{abstract}
We consider infinite classical systems of particles interacting via a smooth, stable and regular two--body potential. 
We establish a new direct integration method to construct the solutions of the stationary BBGKY hierarchy,
assuming the usual Gaussian distribution of momenta. We prove equivalence between the corresponding infinite 
hierarchy and the Kirkwood--Salsburg equations.
A problem of existence and uniqueness of the solutions of the hierarchy
with appropriate boundary conditions is thus solved for low densities. 
The result is extended in a milder sense to systems with a hard core interaction.
\end{abstract}

\section{Introduction}

The BBGKY hierarchy is the fundamental system of equations for the evolution of correlation functions
of a state in Statistical Mechanics, \cite{Bo}. It provides the first bridge between the newtonian description 
of motion and the statistical description of macroscopic systems. For a very large system of particles 
obeying the Newton laws, the set of infinite coupled integro--differential equations is generally
used for the time evolution at least in case of sufficiently smooth correlations (\cite{Cohen}). 
The 70--years--old history of the hierarchy has brought enormous
progress in the investigation of the transition from the microscopic to the macroscopic world:
great advance has come by suitable methods of truncation, approximation and scaling limits, giving
a justification of the kinetic equations that describe particle systems on the mesoscopic (intermediate) scales.

We know that the non truncated BBGKY may contain informations of great relevance that are 
eventually lost in the mesoscopic limits. On the other hand, the complex mathematical structure 
of the hierarchy makes quite hard to use the system of equations in its entirety. For this, despite 
the importance and the long standing history of the problem, it is still necessary to develop new techniques.
In this paper we focus on the problem of the solution of the complete hierarchy in simple cases.
Namely the integration of the equations for an infinite classical system of particles at equilibrium.
This means that we investigate the form that the hierarchy assumes when one seeks stationary 
solutions with a Maxwellian distribution of momenta: an infinite system relating the positional 
correlation functions.

A first attempt in this direction is in the pioneering and remarkable paper by Morrey \cite{Mo55}. 
In that work, a long and yet involved proof leads directly from the BBGKY to some non trivial series expansion, 
which is then shown to be convergent for small densities. Later on, Gallavotti and Verboven \cite{GV75}
try to give a more transparent proof of the theorem of Morrey, dealing with the case of smooth, bounded and 
short range potentials. By direct integration and iteration they derive the Kirkwood--Salsburg 
equations, a set of integral relations which is well known to be one of the several characterizations
of an equilibrium state. To do so, they need assumptions of small density, exponential strong cluster properties, 
and rotation and translation symmetry of the state. Finally, in a series of four papers \cite{GS76}, Gurevich and 
Suhov, with different methods, establish that any Gibbs states (not necessarily Maxwellian) with potential 
belonging to a certain general class and satisfying the stationary BBGKY, is an equilibrium DLR state
associated with the interaction appearing in the hierarchy. This is done for infinite systems of particles over all space with a 
smoothed hard core and short range interaction. Their analysis is focused on the dual hierarchy satisfied by
the generating functions of the Gibbs state and on the study of first integrals of the corresponding hamiltonian.

Our main aim in this paper is to provide, following the program of \cite{Mo55} and \cite{GV75}, a simple, short and 
self--contained method of direct integration of the infinite system of BBGKY equations 
(assuming Maxwellian distribution of momenta) with boundary conditions, 
suitable to be applied to situations more general than those treated in \cite{GV75}. In particular, we want
to extend the results to any stable, long range, possibly singular potential, with some weak decrease property 
assuring the usual regularity required for thermodynamic stability. Moreover, we shall work out a direct iterative procedure 
that does not require small density and rotation invariance assumptions. Also, we decouple the problem of integration
from that of imposing boundary conditions. This last feature allows to weaken the hypotheses of clustering and is the key 
point to extend the results to different kinds of boundary condition. As an example of this, we apply the method to the 
case of non symmetric states, for which translation invariance is assumed only at infinity, such as equilibrium
states of particles in containers with walls extending to infinity. In the cases treated here, together with the 
equivalence with the KS equations, the method gives uniqueness of the solution in the small density--high 
temperature region (resorted to the uniqueness of the solution of the Kirkwood--Salsburg equations).
We hope that the methods developed may help in understanding the structure of nonequilibrium
stationary states, along the line of research originally introduced in \cite{L59}, in which substantial progress has 
not been reached yet.

The only relevant case left open in the Maxwellian framework of this paper,
is that of the hard core interactions, for which the proof presented here does not directly apply. 
The additional difficulty is due to the presence of ``holes'' in the phase space. 
However, we will point out that the procedure established in the proof is uniform in the hard core approximation.
Thus, the classical equilibrium solution of the hard core BBGKY (defined by the corresponding KS equations)
can be uniquely determined as a limit of solutions of smooth hierarchies in a fixed space of states,
with few restrictions on the form of the approximating potentials.

The paper is organized as follows: in Section \ref{sec:setup} we introduce the infinite system of particles
and the infinite system of equations we will deal with; in Section
\ref{sec:mr} we state our results on the integration of the stationary hierarchy, and discuss the proof;
in Section \ref{sec:hc} we analyse the solutions of the hard core equilibrium hierarchy.
Some additional notes are deferred to the Appendices. In particular, in Appendix B the integration 
problem is solved for the one--dimensional hard core system. Finally, we dedicate Appendix C 
to revise the argument of \cite{GV75}: we point out an error in the procedure, explain 
how to correct it and make comparisons with the new method.

%%%%%%%%%%%%%%%%%%%%%%%%%%%%%%%%%%%%%%%%%%%%%%%%
%%%%%%%%%%%%%%%%%%%%%%%%%%%%%%%%%%%%%%%%%%%%%%%%
%%%%%%%%%%%%%%%%%%%%%%%%%%%%%%%%%%%%%%%%%%%%%%%%
%%%%%%%%%%%%%%%%%%%%%%%%%%%%%%%%%%%%%%%%%%%%%%%%
%%%%%%%%%%%%%%%%%%%%%%%%%%%%%%%%%%%%%%%%%%%%%%%%

\section{Setup} \label{sec:setup}

We will consider an infinite classical system of particles with unitary mass, interacting through 
a smooth stable pair potential, possibly diverging at the origin and with some weak decrease property at infinity. 
We will introduce a class of measures over the phase space of the system, with features
that assure the existence of correlations. Then we will write the infinite hierarchy of 
equations satisfied by them, assuming a Maxwellian distribution for the velocities, as well as smoothness
of the correlations. We list below the definitions required:

1) The {\em phase space} $\HH$
is given by the infinite countable sets 
$X = \{x_i\}_{i=1}^{\infty} \equiv \{(q_i,p_i)\}_{i=1}^{\infty}$, $x_i\in\RRR^{\nu}\times\RRR^{\nu},
\nu =1,2,3,$ which are locally finite: $\L\cap \left(\cup_{i=1}^{\infty}q_i\right)$ 
is finite for any bounded region $\L \subset \RRR^\nu$.

2) The {\em Hamiltonian} of the system is defined by the formal function on $\HH$
\be
H(X) = \sum_{i=1}^{\infty} \frac{p_i^2}{2} + \sum_{i<j}^{1,\infty}\f(q_i-q_j)\;, 
\label{eq:Ham}
\ee
where $\f:\RRR^\nu\setminus\{0\} \rightarrow \RRR$ is assumed to be radial 
and such that, for some $B>0$ and every $c>0,$
\bea
&& \sum_{i<j}^{1,n}\f(q_i-q_j) \geq -nB\ \ \ \ \ \forall\ n\geq 0,\ \ \ \ \ q_1,\cdots,q_n\in\RRR^\n
\ \ \ \ \ \mbox{(stability)}\;; \label{eq:stab}\\
&& \int_{\RRR^\n} dq|1-e^{-c\f(q)}| +  \int_{\RRR^\n} dq|\nabla(1-e^{-c\f(q)})| < +\infty
\ \ \ \ \ \ \ \ \ \mbox{(decrease)}\;;\label{eq:decrease} \\
&& e^{-c\f} \in C^1(\RRR^\n)\ \ \ \ \ \ \ \ \ \ \ \ \ \ \ \ \ \ \ \ \ \ \ \ \ \ \ \ \ \ \ \ \ \ \ \ \ \ \ \ \ \ \ \ \ \ \ \ \ \ \ \ 
\mbox{(smoothness)}\;.\label{eq:smoothness}
\eea
Condition (\ref{eq:stab}) implies that $\f$ is bounded from below by $-2B,$ while 
the stated decrease property is equivalent to absolute integrability of $\f$ and $|\nabla\f|$ outside any ball centered in the origin.
The smoothness condition ensures that $\f$ is $C^1$ outside the origin, and that the singularity at the origin (if any) is approached
not too slowly (for instance logarithmic divergences are excluded); see the Remark on page \pageref{thm:main} for a discussion
on the regularity conditions.

3) A {\em state} is a probability measure $\m$ on the Borel sets of $\HH:$ see \cite{Ru69}, \cite{Ru67}. 
Following \cite{Ru67}, we may define it as a collection $\{\m_{\L}\}$ of probability measures 
on $\HH_{\L}:= \oplus_{n=0}^{\infty} (\L\times\RRR^\nu)^n,$ $\L\subset\RRR^\nu$ bounded open,
satisfying the following properties:
\begin{description}
\item[a.] \ \ \ the restriction $\m_{\L}$ to the space $(\L\times\RRR^\nu)^n$ is absolutely continuous
with respect to Lebesgue measure, with a density of the form $\frac{1}{n!}\m_{\L}^{(n)}(x_1,\cdots,x_n)$, symmetric 
for exchange of particles;
\item[b.] \ \ \ $\m_{\emptyset}^0(\RRR^0\times\RRR^0)=1$;
\item[c.] \ \ \ if $\L \subset \L'$, then
\be
\m_{\L}^{(n)}(x_1,\cdots,x_n) = \sum_{p=0}^{\infty}\frac{1}{p!} \int_{((\L'\setminus\L)\times\RRR^\nu)^p}
dx_{n+1}\cdots dx_{n+p} \m_{\L'}^{(n+p)}(x_1,\cdots,x_{n+p})\;;
\ee
\item[d.] \ \ \ $\m_{\L}^{(n)}\leq C^n_{\L}\prod_{i=1}^n \eta_{\L}(|p_i|)$ for some constant $C_{\L}$
and $\eta_{\L}(|p|)\in L^1(\RRR^\nu)$, so that the expression in the right hand side of the following equation
is well defined:
\be
\ol\r_n(x_1,\cdots,x_n) := \sum_{p=0}^{\infty}\frac{1}{p!}\int_{(\L\times\RRR^\nu)^p} dx_{n+1}\cdots dx_{n+p} 
\m_{\L}^{(n+p)}(x_1,\cdots,x_{n+p})\;, \label{eq:defcf}
\ee
where we assume $q_1,\cdots,q_n\in\L;$ 
\item[e.] \ \ \ there exist $\ol\xi>0, \eta(|p|)\in L^1(\RRR^\nu),$
such that  
\be
\ol\r_n(x_1,\cdots,x_n)\leq \ol\xi^n\prod_{i=1}^n \eta (|p_i|)\;. \label{eq:olrnbound}
\ee
\end{description}
Equation (\ref{eq:defcf}) defines the {\em correlation functions} of the state.

The state is said to be {\em invariant} if
\be
\m^{(n)}_\L(x_1,\cdots,x_n) = \m^{(n)}_{\L+a}(q_1+a,p_1,\cdots,q_n+a,p_n)  \label{eq:definv}
\ee
for all $a\in\RRR^\nu$ and $\L.$ 

{\em Remarks.} (i) Condition {\bf b.}, together with the compatibility condition {\bf c.}, imply the normalization
of the measures $\m_\L,$ i.e. 
\be
\sum_{n\geq 0}\frac{1}{n!}\int_{(\L\times\RRR^\nu)^n} dx_1\cdots dx_n
\m_{\L}^{(n)}(x_1,\cdots,x_n) = 1\;. \label{eq:norm}
\ee
\noindent (ii) Condition {\bf e.} guarantees convergence of the inverse formula
\be
\m_{\L}^{(n)}(x_1,\cdots,x_n) = \sum_{p=0}^{\infty}\frac{(-1)^p}{p!}\int_{(\L\times\RRR^\nu)^p}
dx_{n+1}\cdots dx_{n+p} \ol\r_{n+p}(x_1,\cdots,x_{n+p})\;;
\ee
hence the definition of correlation functions of the state is well posed. (iii) The definition (\ref{eq:defcf}) implies
that an invariant state has also translation invariant correlation functions.

Finally, we say that a state is {\em smooth Maxwellian} (with parameter $\b$ and Hamiltonian $H$)
when there exist $g_n:\RRR^{\nu n} \rightarrow\RRR^+, \b>0$ and $\xi>0, C_n>0$
such that the correlation functions have the form 
\bea
&&\ol\r_n(x_1,\cdots,x_n) = \prod_{i=1}^n\left(\frac{e^{-\b p_i^2/2}}{(2\pi/\b)^\nu}\right)\r_n(q_1,\cdots,q_n)\;, \nn\\
&&\r_n(q_1,\cdots,q_n) = e^{- \b \sum_{i<j}^{1,n}\f(q_i-q_j)} g_n(q_1,\cdots,q_n)\;,
\ \ \ \ \ \ \ \ \ g_n\in C^1(\RRR^{\n n})\;, \label{eq:Maxwell} 
\eea
with the following explicit bounds:
\bea
&& \r_n(q_1,\cdots,q_n) \leq \xi^n e^{-\b W_{q_i}(q_1,\cdots,q_{i-1},q_{i+1},\cdots,q_n)}\;,\label{eq:rnbound}\\
&& |\nabla_{q_i}\r_n(q_1,\cdots,q_n)| \leq C_n \xi^n e^{-\b W_{q_i}(q_1,\cdots,q_{i-1},q_{i+1},\cdots,q_n)}\;,
\ \ \ \ \ \ \ \ i = 1,\cdots,n\;, \label{eq:gradrnbound} 
\eea
where
\be
W_{q}(q_1,\cdots,q_m) = \sum_{i=1}^m \f(q-q_i)\;.
\ee
Equations (\ref{eq:rnbound}) and (\ref{eq:gradrnbound}) imply also 
$|\nabla_{q_i}\left(e^{+\b W_{q_i}(q_1,\cdots,q_{i-1},q_{i+1},\cdots,q_n)}\r_n(q_1,\cdots,q_n)\right)| \leq C'_n \xi^n$
for some $C'_n>0.$ Notice that this definition of smooth Maxwellian state is equivalent to the one without the exponentials
in formulas (\ref{eq:Maxwell})--(\ref{eq:gradrnbound}) in the case $\f$ is a $C^1$ function over all $\RRR^{\n}$
(i.e. a bounded function: hence in that case we recover the definition used in \cite{GV75}). 

4) A smooth Maxwellian state with parameter $\b$ is a {\em stationary solution} of the {\em BBGKY hierarchy} 
of equations with Hamiltonian $H$ if 
\bea
&&\nabla_{q_1} \r_n(q_1,\cdots,q_n)=-\b\Big[\nabla_{q_1}W_{q_1}(q_2,\cdots,q_n)\r_n(q_1,\cdots,q_n)\nn\\
&&\ \ \ \ \ \ \ \ \ \ \ \ \ \ \ \ \ \ \ \ \ + \int_{\RRR^\n}dy \nabla_{q_1} \f(q_1-y)\r_{n+1}(q_1,\cdots,q_n,y)\Big]\;,
\ \ \ \ \ \ \ \ \ \ n \geq 1\;,\label{eq:grad-rho}
\eea
for any $q_1,\cdots,q_n\in \RRR^{\nu}.$ 
For smooth Maxwellian states, this is equivalent to say that the $\ol\r_n$ solve the complete form 
of the stationary Bogolyubov equations
\bea
&& \sum_{i=1}^n \left[p_i\cdot \nabla_{q_i}\ol\r_n(x_1,\cdots,x_n) - \nabla_{q_i}W_{q_i}(q_1,\cdots q_{i-1},
q_{i+1},\cdots,q_n) \cdot \nabla_{p_i}\ol\r_n(x_1,\cdots,x_n)\right]\nn\\
&&\ \ \ \ \ =\sum_{i=1}^n \int_{\RRR^{\nu}\times\RRR^{\nu}}d\xi d\pi  \nabla_{q_i}\f(q_i-\xi)\cdot 
\nabla_{p_i}\ol\r_{n+1}(x_1,\cdots,x_n,\xi,\pi)\;,\ \ \ \ \ \ \ \ \ \ n \geq 1\;, \label{eq:BBGKYcomplete}
\eea
for all $x_1,\cdots,x_n\in \RRR^{2\nu},$ as it can be immediately verified using the arbitrariness of $p_1,\cdots,p_n.$

%%%%%%%%%%%%%%%%%%%%%%%%%%%%%%%%%%%%%%%%%%%%%%%%
%%%%%%%%%%%%%%%%%%%%%%%%%%%%%%%%%%%%%%%%%%%%%%%%
%%%%%%%%%%%%%%%%%%%%%%%%%%%%%%%%%%%%%%%%%%%%%%%%
%%%%%%%%%%%%%%%%%%%%%%%%%%%%%%%%%%%%%%%%%%%%%%%%
%%%%%%%%%%%%%%%%%%%%%%%%%%%%%%%%%%%%%%%%%%%%%%%%

\section{Main results} \label{sec:mr}

Our first task is to solve Equation (\ref{eq:BBGKYcomplete}), provided the assumption that the state is 
translation invariant and smooth Maxwellian with parameter $\b>0,$ \cite{Mo55}.
Thus, as pointed out above, the problem is equivalent to consider the infinite system (\ref{eq:grad-rho})
and find a solution $\r_n\in C^1(\RRR^{\n n})$ symmetric in the exchange of particle labels, 
translation invariant and bounded as in (\ref{eq:rnbound})--(\ref{eq:gradrnbound}).
The equations are then parametrized by the two strictly positive constants $\r \equiv \r_1(q_1),$ and $\b>0$
(the Eq. (\ref{eq:grad-rho}) for $n=1$ will be useless in our assumptions). Once discussed this 
problem (Theorem \ref{thm:main} below), we will give a generalization of our result to the case in which 
the system of particles is contained in certain unbounded subsets of $\RRR^\n$ 
and translation invariance holds only at infinity (Theorem \ref{thm:infcont}).

To achieve the integration of Eq. (\ref{eq:grad-rho}) we need to add some boundary condition. We choose the
{\em cluster property} defined as follows. Denote $A_n$ and $B_m$ any two 
disjoint clusters of $n$ and $m$ points respectively in $\RRR^\n,$ 
such that $A_n\cup B_m = (q_1,\cdots,q_{n+m}).$ Indicate
$\dist(A_n,B_m) = \inf\{|q_i-q_j|; q_i\in A_n, q_j\in B_m\}.$ Then there exists a constant $C>0$
and a monotonous decreasing function $u$ vanishing at infinity such that
\bea
|\r_{n+m}(A_n, B_m) - \r_n(A_n)\r_m(B_m)| \leq C^{n+m} u(\dist(A_n,B_m))\;. \label{eq:DIS}
\eea
This is known to be satisfied by every equilibrium state for the considered class of potentials,
at least for sufficiently small density (small $\r$) and high temperature (small $\b$): see for instance \cite{Ru69}.

Our main result is the following
\begin{teorema} \label{thm:main}
If a smooth Maxwellian invariant state is a stationary solution of the BBGKY hierarchy with cluster
boundary conditions, then there exists a constant $z$ such that the correlation functions of the state
satisfy
\bea
&&\r_n(q_1,\cdots,q_n)= z e^{-\b W_{q_1}(q_2,\cdots,q_n)}\Big[\r_{n-1} (q_2,\cdots,q_n)\label{eq:KS}\\
&&\ \ \ \ \ \ \ \ \ \ +\sum_{m=1}^{\infty}\frac{(-1)^m}{m!}\int_{\RRR^{m\n}} dy_1 \cdots dy_m 
\prod_{j=1}^m \left(1-e^{-\b\f(q_1-y_j)}\right) \r_{n-1+m} (q_2,\cdots,q_n,y_1,\cdots,y_m)\Big]\;.\nn
\eea
Conversely, a smooth Maxwellian state with parameter $\b$ and satisfying (\ref{eq:KS}) is a stationary 
solution of the BBGKY hierarchy.
\end{teorema}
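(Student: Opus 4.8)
\emph{Reformulation.} The plan is to recast the stationary hierarchy in an integrated form adapted to the cluster boundary data. First I would multiply Equation (\ref{eq:grad-rho}) by $e^{\b W_{q_1}(q_2,\cdots,q_n)}$: the convective term $\b\nabla_{q_1}W_{q_1}(q_2,\cdots,q_n)\r_n$ is absorbed into a total gradient, while writing $-\b\nabla_{q_1}\f(q_1-y)=-e^{\b\f(q_1-y)}\nabla_{q_1}(1-e^{-\b\f(q_1-y)})$ and merging the factor $e^{\b\f(q_1-y)}$ into the Gibbs weight converts the collision term into a Mayer factor. With the shorthand $u[Q](q_1):=e^{\b W_{q_1}(Q)}\r_{|Q|+1}(q_1,Q)$ for a spectator configuration $Q$, this yields the compact identity
\[
\nabla_{q_1}u[Q](q_1)=-\int_{\RRR^{\n}} dy\,\nabla_{q_1}\big(1-e^{-\b\f(q_1-y)}\big)\,u[Q\cup\{y\}](q_1),
\]
which is equivalent to (\ref{eq:grad-rho}) and in which the Gibbs weight no longer appears explicitly.

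\emph{Converse direction.} For a state obeying (\ref{eq:KS}), i.e. $u[Q]=z\,\Phi[Q]$ with $\Phi[Q](q_1):=\r_{|Q|}(Q)+\sum_{m\geq1}\frac{(-1)^m}{m!}\int dy_1\cdots dy_m\prod_j(1-e^{-\b\f(q_1-y_j)})\r_{|Q|+m}(Q,y_1,\cdots,y_m)$, I would differentiate this series term by term in $q_1$; this is legitimate because the bounds (\ref{eq:rnbound})--(\ref{eq:gradrnbound}) together with the decrease property (\ref{eq:decrease}) make it and its $q_1$-derivative normally convergent. Only the Mayer factors carry $q_1$; using the symmetry of $\r_{|Q|+m}$ the $m$ terms produced by the product rule coincide, and after reindexing $m\mapsto m-1$ one finds $\nabla_{q_1}\Phi[Q]=-\int dy\,\nabla_{q_1}(1-e^{-\b\f(q_1-y)})\,\Phi[Q\cup\{y\}]$. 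Multiplying by $z$ this is exactly the identity above, so (\ref{eq:KS}) implies (\ref{eq:grad-rho}); this is the short, direct part.

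\emph{Direct direction and boundary data.} For the forward implication I would first fix $z$ from the one-point equation: by translation invariance $u[\emptyset]=\r_1=\r$ is constant, so I set $z:=\r/\Phi[\emptyset]$, the denominator being a convergent, nonvanishing series. Next I introduce the discrepancy $D[Q]:=u[Q]-z\,\Phi[Q]$. By the reformulation and the converse computation, both $u[Q]$ and $z\,\Phi[Q]$ obey the same first identity, so $D$ satisfies the homogeneous recursion $\nabla_{q_1}D[Q]=-\int dy\,\nabla_{q_1}(1-e^{-\b\f(q_1-y)})\,D[Q\cup\{y\}]$. The cluster property (\ref{eq:DIS}) then supplies the integration constants: sending $q_1\to\infty$ away from $Q$ gives $u[Q]\to\r\,\r_{|Q|}(Q)$, while dragging the dummy points $y_j$ to infinity together with $q_1$ and using the factorization of $\r_{|Q|+m}$ shows $z\,\Phi[Q]\to z\,\Phi[\emptyset]\,\r_{|Q|}(Q)=\r\,\r_{|Q|}(Q)$; hence $D[Q](q_1)\to0$. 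It remains to integrate the homogeneous recursion against this vanishing datum and conclude $D\equiv0$, which by the definitions of $z$ and $\Phi$ is precisely (\ref{eq:KS}).

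\emph{Main obstacle.} The crux is this last step. Integrating the recursion along a ray and integrating by parts in the Mayer factor produces the leading term $-\int dy\,(1-e^{-\b\f(q_1-y)})D[Q\cup\{y\}]$ plus a boundary contribution from infinity; the delicate point, and the one where the argument of \cite{GV75} is flawed (Appendix C), is that $D[Q\cup\{y\}](q_1+t\hat e)$ does \emph{not} decay when $y$ follows $q_1$ out to infinity, so a naive estimate over the infinite ray diverges. I expect to resolve this exactly as in the analysis of the Kirkwood--Salsburg operator: by translation invariance the non-decaying part equals the finite-separation discrepancy $D[\{w\}](0)$, so the cluster factorization reassembles it into a further linear functional of $D$, and the whole homogeneous problem collapses to the linearised KS fixed-point equation $D=\mathcal{K}D$. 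In the small-density/high-temperature region $\mathcal{K}$ is a contraction in the weighted norm $\sup_n\xi^{-n}\|\cdot\|_\infty$, forcing $D\equiv0$; the same contraction delivers the uniqueness statement through the known uniqueness for the KS equations. The remaining, routine, obstacles are the justifications of the limit exchanges (dominated convergence from (\ref{eq:rnbound})--(\ref{eq:gradrnbound}) and (\ref{eq:decrease})) in the passage $q_1\to\infty$ and in the term-by-term differentiation of the series.
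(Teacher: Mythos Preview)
Your reformulation and the converse direction are fine and essentially match the paper. The forward direction, however, follows a genuinely different route, and as it stands it does not prove the theorem as stated.

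\medskip

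\textbf{The gap.} Theorem~\ref{thm:main} carries \emph{no} smallness assumption on $\xi$; that is introduced only in Corollary~\ref{cor:main} for uniqueness. Your argument for the direct statement ends by appealing to a contraction of the linearised Kirkwood--Salsburg operator $\mathcal{K}$, which is available only for small $\xi$. So even if every step were carried out, you would be proving a strictly weaker result. Moreover, the ``main obstacle'' paragraph is a sketch, not a proof: the claim that the non-decaying boundary contribution along the infinite ray ``reassembles'' into $D=\mathcal{K}D$ is plausible (your asymptotic $D[Q\cup\{q+w\}](q)\to\r_{|Q|}(Q)\,D[\{w\}](0)$ is the right heuristic) but the passage from the integrated recursion to a closed linear equation for $D$ is not actually executed, and this is precisely where the method of \cite{GV75} requires the extra hypotheses (short range, rotation invariance, exponential clustering) that the paper avoids.

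\medskip

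\textbf{What the paper does instead.} The key idea is to \emph{decouple} the iteration from the boundary condition: one integrates the identity $\nabla_{q_1}\hat\r_n=-\int dy\,\nabla_{q_1}(1-e^{-\b\f(q_1-y)})\hat\r_{n+1}$ over a \emph{finite} segment from an arbitrary $q_0$ to $q_1$, and then iterates by repeated integration by parts using the auxiliary function
\[
K_{q_0q_1}(q,y)=(1-e^{-\b\f(q-y)})-(1-e^{-\b\f(q_1-y)})-(1-e^{-\b\f(q_0-y)})\;.
\]
The boundary terms at $q_0$ and $q_1$ produced at each step are exactly the two sides of a $q_0\leftrightarrow q_1$--symmetric identity, while the remainder after $N$ steps carries $\prod_{j=1}^N K_{q_0q_1}(\bar q_1,y_j)$ and a factor $1/N!$ from symmetrisation; hence it vanishes \emph{factorially} as $N\to\infty$, with no restriction on $\xi$. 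Only \emph{after} summing the resulting convergent series (the symmetric formula (\ref{eq:KSq0})) does one send $|q_0|\to\infty$, at which point the weak cluster property (\ref{eq:DIS}) and dominated convergence suffice. The activity then appears as $z=\r/\Phi[\emptyset]$, with positivity of the denominator checked separately (Appendix~A). This is exactly the inversion of order---iterate first on a bounded interval, take the limit to infinity last---that removes the need for smallness, rotation symmetry, or strong clustering.
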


The integral relations (\ref{eq:KS}) are called the {\em Kirkwood--Salsburg equations}.
The series in the right hand side is absolutely convergent uniformly in $q_1,\cdots,q_n$
since $\f$ satisfies (\ref{eq:decrease}) and $\r_n\leq(\xi e^{2\b B})^n.$
We shall point out that, for $n=1,$ the first term in the right hand side has to be interpreted as $z;$ 
the Equation is in this case independent of $q_1$ by translation invariance: it
provides a definition of $z$ in terms of integrals of all the correlation functions.
Formula (\ref{eq:KS}) is one of the several characterizations of an {\em equilibrium} state for small density and high temperature,
and $z$ is identified with the {\em activity} of the system, e.g. \cite{GaSM}.

{\em Remark (regularity).} The regularity assumptions on the state and the potential made in Section~\ref{sec:setup} can 
be somewhat relaxed. For instance, we could require that $g_n=e^{+ \b \sum_{i<j}^{1,n}\f(q_i-q_j)} \r_n$ 
(or even just $e^{+\b W_{q_i}(q_1,\cdots,q_{i-1},q_{i+1},\cdots,q_n)} \r_n, i=1,\cdots,n$) has the same 
regularity as the function $e^{-\b\f},$ and piecewise continuity and boundedness of the derivative of $\f$ 
outside the origin (instead of $C^1$ regularity). What is strictly necessary in order to work out the proof below,
is a combined condition on $\f$ and $\r;$ namely, that Eq. (\ref{eq:hatBBGKY}) below holds for \textit{all} values of the 
first variable (so that the integration over straight lines can be performed, as in formula (\ref{eq:inthat})).
This feature prevents us to apply the method to the case of potentials having a hard core (both pure hard core and 
smoothed versions of it), unless we specify the correct boundary conditions for the functions $g_n.$ But this seems to 
be not trivial. For instance, one might think to set $g_n=0$ (or $g_n$ equal to a fixed constant value) 
inside the cores and on the boundaries. In this case, the proof of our theorem can be applied to show that, if there were 
such a solution, it would have to satisfy the Kirkwood--Salsburg equations, which in turn have a unique solution for small 
density and high temperature. But the latter does not satisfy the above prescription on $g_n,$
as it can be easily checked by direct calculation of the coefficients of the Mayer expansion.
Hence a solution with such a prescription cannot exist.
The problem of finding minimal boundary conditions for the $g_n$ that ensure existence and uniqueness 
(and equivalence with the Kirkwood--Salsburg equations) in presence of hard cores is open.

\bigskip

\begin{proof}[Proof of Theorem \ref{thm:main}] We prove here the direct statement.
The proof of the converse statement (which has been given also in \cite{Ga68}) 
is analogous to the one of Lemma \ref{lem:converse}, which will be discussed in Section \ref{sec:hc}.

First, we rewrite (\ref{eq:grad-rho}) as
\bea
&& e^{\b W_{q_1}(q_2,\cdots,q_n)}\Big(\nabla_{q_1} \r_n(q_1,\cdots,q_n)+\b \r_n(q_1,\cdots,q_n)
\nabla_{q_1}W_{q_1}(q_2,\cdots,q_n)\Big)\nn\\
&& = -\b \int_{\mathbb{R}^3}dy \nabla_{q_1} \f(q_1-y) e^{\b W_{q_1}(q_2,\cdots,q_n)}
\r_{n+1}(q_1,\cdots,q_n,y)\;,
\eea
so that the left hand side is equal to $\nabla_{q_1}\left( \r_ne^{\b W_{q_1}}\right).$
For the sake of clearness, hereafter we will put 
\bea
&&\hat\r_n(q_1;q_2,\dots,q_n):= e^{\b W_{q_1}(q_2,\dots,q_n)}\r_n(q_1,\dots,q_n)\;;\nn\\
&&K_{q_0q_1}(q,y):= (1-e^{-\b\f(q-y)})-(1-e^{-\b\f(q_1-y)})-(1-e^{-\b\f(q_0-y)})\;.\label{eq:defK}
\eea
As functions of the variable $q_1,$ the $\hat\r_n$ are of class $C^1(\RRR^\n)$ for any choice of $q_2,\cdots,q_n\in\RRR^\n.$
They satisfy
\bea
\nabla_{q_1}\hat\r_n(q_1;q_2,\dots,q_n)
= -\int_{\RRR^\n} dy_1 \nabla_{q_1}
\left(1-e^{-\b\f(q_1-y_1)}\right)\hat\r_{n+1}(q_1;q_2,\dots,q_n,y_1)\label{eq:hatBBGKY}
\eea
for any $q_1,\cdots,q_n\in \RRR^{\nu}.$ 

Fix $q_0\in\RRR^\n$ arbitrarily. We shall integrate the previous equation along a straight line 
$\overrightarrow{q_0q_1}$ connecting $q_0$ to $q_1.$ Using (\ref{eq:defK}) we deduce
\bea
\hat\r_n(q_1;q_2,\dots,q_n)-\hat\r_n(q_0;q_2,\dots,q_n) =-\int_{q_0}^{q_1} d\bar q_1\int_{\RRR^d} dy_1
\frac{\partial K_{q_0q_1}}{\partial \bar q_1}(\bar q_1,y_1)\hat\r_{n+1}(\bar q_1;q_2,\dots,q_n,y_1)\;,
\label{eq:inthat}
\eea
where $\int_{q_0}^{q_1} d\bar q_1$ and $\frac{\partial}{\partial \bar q_1}$ denote respectively
integration and differentiation along the straight line. Interchanging the integrations in the right hand side 
and integrating by parts we find
\bea
&&\hat\r_n(q_1;q_2,\dots,q_n)-\hat\r_n(q_0;q_2,\dots,q_n)\label{eq:N0}\\
&&= -\int dy_1\Big[ -\left(1-e^{-\b\f(q_0-y_1)}\right)\hat\r_{n+1}(q_1;q_2,\dots,q_n,y_1)\nn\\
&&\ \ \ \ \ \ \ \ \ \ +  \left(1-e^{-\b\f(q_1-y_1)}\right)\hat\r_{n+1}(q_0;q_2,\dots,q_n,y_1)  \Big]\nn\\
&&\ \ \ \ \ +\int_{\RRR^d} dy_1\int_{q_0}^{q_1} d\bar q_1
K_{q_0q_1}(\bar q_1,y_1)\frac{\partial \hat\r_{n+1}}{\partial \bar q_1}(\bar q_1;q_2,\dots,q_n,y_1)\;.\nn
\eea
All the above integrals are absolutely convergent thanks to (\ref{eq:decrease}), (\ref{eq:rnbound}) and (\ref{eq:gradrnbound}).

In the last term of the above equation we may iterate the projection of (\ref{eq:hatBBGKY}) 
along $\overrightarrow{q_0q_1},$ that can be written as
\bea
\frac{\partial\hat\r_n(\bar q_1;q_2,\dots,q_n)}{\partial \bar q_1}
= -\int_{\RRR^d} dy_1\frac{\partial K_{q_0q_1}}{\partial \bar q_1}(\bar q_1,y_1) \hat\r_{n+1}(\bar q_1;q_2,\dots,q_n,y_1)\;.
\label{eq:hatBBGKYp}
\eea
The last term of (\ref{eq:N0}) then becomes, proceeding as after (\ref{eq:inthat}),
\bea
&&-\int_{\RRR^d} dy_1\int_{\RRR^d} dy_2\int_{q_0}^{q_1} d\bar q_1
K_{q_0q_1}(\bar q_1,y_1)\frac{\partial K_{q_0q_1}}{\partial \bar q_1}(\bar q_1,y_2)
\hat\r_{n+2}(\bar q_1;q_2,\cdots,q_n,y_1,y_2)\;\nn\\
&&= -\frac{1}{2}\int_{\RRR^d} dy_1\int_{\RRR^d} dy_2
\Big[ \prod_{j=1,2}\left(1-e^{-\b\f(q_0-y_j)}\right)\hat\r_{n+2}(q_1;q_2,\dots,q_n,y_1,y_2)\nn\\
&&\ \ \ \ \ \ \ \ \ \ -  \prod_{j=1,2}\left(1-e^{-\b\f(q_1-y_j)}\right)\hat\r_{n+2}(q_0;q_2,\dots,q_n,y_1,y_2)\Big]\nn\\
&&\ \ \ \ \ +\frac{1}{2}\int_{\RRR^d} dy_1\int_{\RRR^d} dy_2\int_{q_0}^{q_1} d\bar q_1
\prod_{j=1,2}K_{q_0q_1}(\bar q_1,y_j)
\frac{\partial \hat\r_{n+2}}{\partial \bar q_1}(\bar q_1;q_2,\dots,q_n,y_1,y_2)\;,
\eea
having used also the symmetry for exchange of particles to perform the integration by parts.
We may iterate again (\ref{eq:hatBBGKYp}) in the last term of this formula.
After $N$ integrations by parts ($N$ iterations) we have
\bea
&&\hat\r_n(q_1;q_2,\dots,q_n)-\hat\r_n(q_0;q_2,\dots,q_n)\nn\\
&&=-\sum_{k=1}^{N}\frac{(-1)^k}{k!}
\int_{\RRR^{\n k}} dy_1\cdots dy_k \Big[\prod_{j=1}^k\Big(1-e^{-\b\f(q_0-y_j)}\Big)
\hat\r_{n+k}(q_1;q_2,\cdots,q_n,y_1,\cdots,y_k)\nn\\
&&\ \ \ \ \ \ \ \ \ \ -\prod_{j=1}^k\Big(1-e^{-\b\f(q_1-y_j)}\Big)
\hat\r_{n+k}(q_0;q_2,\cdots,q_n,y_1,\cdots,y_k)\Big]\nn\\
&&\ \ \ \ \ -\frac{1}{N!}\int_{\RRR^{\n(N+1)}} dy_1\cdots dy_{N+1}\int_{q_0}^{q_1} d\bar q_1\prod_{j=1}^N K_{q_0q_1}(\bar q_1,y_j)\nn\\
&&\ \ \ \ \ \ \ \ \ \ \cdot\frac{\partial K_{q_0q_1}}{\partial \bar q_1}(\bar q_1,y_{N+1})
\hat\r_{n+N+1}(\bar q_1;q_2,\cdots,q_n,y_1,\cdots,y_{N+1})\;\label{eq:prima di rem}.
\eea

Assumption (\ref{eq:rnbound}) allows to bound explicitly the last term with 
\bea
\frac{1}{N!}|q_1-q_0|\left(3\int_{\RRR^\n}dy\left(1-e^{-\b\f(y)}\right)\right)^N
\left(\int_{\RRR^\n}dy|\nabla(1-e^{-\b\f(y))}|\right) \xi^{n+1+N}\;.\label{eq:rem}
\eea
Thus by taking $N\rightarrow\infty$, it follows that,
for any arbitrary $q_0\in\RRR^\n,$ the correlation functions satisfy the set of integral equations
\bea
&&e^{-\b W_{q_0}(q_2,\cdots,q_n)}\sum_{k=0}^{\infty}\frac{(-1)^k}{k!}
\int_{\RRR^{\n k}} dy_1\cdots dy_k \prod_{j=1}^k\Big(1-e^{-\b\f(q_0-y_j)}\Big)\nn\\
&&\ \ \ \ \ \cdot e^{\b W_{q_1}(y_1,\cdots,y_k)}\r_{n+k}(q_1,q_2,\cdots,q_n,y_1,\cdots,y_k)\nn\\
&&= e^{-\b W_{q_1}(q_2,\cdots,q_n)}\sum_{k=0}^{\infty}\frac{(-1)^k}{k!}
\int_{\RRR^{\n k}} dy_1\cdots dy_k \prod_{j=1}^k\Big(1-e^{-\b\f(q_1-y_j)}\Big)\nn\\
&&\ \ \ \ \ \cdot e^{\b W_{q_0}(y_1,\cdots,y_k)}\r_{n+k}(q_0,q_2,\cdots,q_n,y_1,\cdots,y_k)\;, \label{eq:KSq0}
\eea
where the series in both sides are absolutely convergent uniformly in $q_0,q_1,\cdots,q_n.$

What is left in order to complete the proof is just taking the limit as $|q_0| \rightarrow \infty$ of (\ref{eq:KSq0}). 
Using the translation invariance of the correlation functions we have
\bea
&&\r_n(q_1,\cdots,q_n)  \sum_{k=0}^{\infty}\frac{(-1)^k}{k!}
\int_{\RRR^{\n k}} dy_1\cdots dy_k \prod_{j=1}^k\Big(1-e^{-\b\f(y_j)}\Big)\r_{k}(y_1,\cdots,y_k)\nn\\
&&=\r e^{-\b W_{q_1}(q_2,\cdots,q_n)}\sum_{k=0}^{\infty}\frac{(-1)^k}{k!}
\int_{\RRR^{\n k}} dy_1\cdots dy_k \prod_{j=1}^k\Big(1-e^{-\b\f(q_1-y_j)}\Big)
\r_{n-1+k}(q_2,\cdots,q_n,y_1,\cdots,y_k) \nn\\
&&\ \ \ \ \ + \EE_{n,q_0}(q_1,\cdots,q_n)\;, \label{eq:ultima}
\eea
where the term $k=0$ in the first sum has to be interpreted as $1,$ and the error term is given by
\bea
&& \EE_{n,q_0}(q_1, \cdots, q_n) = e^{-\b W_{q_0}(q_2,\cdots,q_n)}\sum_{k=0}^{\infty}\frac{(-1)^k}{k!}
\int_{\RRR^{\n k}} dy_1\cdots dy_k \prod_{j=1}^k\Big(1-e^{-\b\f(y_j)}\Big)\nn\\
&&\ \ \ \ \ \cdot\Big[ \r_{n}(q_1,q_2,\cdots,q_n) \r_k(q_0+y_1,\cdots,q_0+y_k) \nn\\
&&\ \ \ \ \ - e^{\b W_{q_1}(q_0+y_1,\cdots,q_0+y_k)}\r_{n+k}(q_1,q_2,\cdots,q_n,q_0+y_1,\cdots,q_0+y_k)\Big]\nn\\
&& + \r_n(q_1,\cdots,q_n) \left(1-e^{-\b W_{q_0}(q_2,\cdots,q_n)}\right) \nn\\
&&\ \ \ \ \ \cdot\sum_{k=0}^{\infty}\frac{(-1)^k}{k!} \int_{\RRR^{\n k}} dy_1\cdots dy_k \prod_{j=1}^k\Big(1-e^{-\b\f(y_j)}\Big)
\r_{k}(y_1,\cdots,y_k)\nn\\
&&+e^{-\b W_{q_1}(q_2,\cdots,q_n)}\sum_{k=0}^{\infty}\frac{(-1)^k}{k!}
\int_{\RRR^{\n k}} dy_1\cdots dy_k \prod_{j=1}^k\Big(1-e^{-\b\f(q_1-y_j)}\Big)\nn\\
&&\ \ \ \ \ \cdot \left[ e^{\b W_{q_0}(y_1,\cdots,y_k)}\r_{n+k}(q_0,q_2,\cdots,q_n,y_1,\cdots,y_k)
- \r \r_{n-1+k}(q_2,\cdots,q_n,y_1,\cdots,y_k)\right]\;. \label{eq:sommaR}
\eea
The cluster property (\ref{eq:DIS}) and the Dominated Convergence Theorem 
(which can be applied by assumption (\ref{eq:rnbound})) imply that $\EE_{n,q_0} \longrightarrow 0$ as 
$|q_0|\longrightarrow 0.$

The sum in the left hand side of (\ref{eq:ultima}) is a strictly positive constant depending on $\b$ and 
$\r_k, k\geq 1;$ this follows by using that $\r_k$ are correlation functions of a probability measure, and it is
checked for completeness in Appendix A. The direct statement of the Theorem is thus proved by calling
\bea
z = \frac{\r}{\Big[1+\sum_{k=1}^{\infty}\frac{(-1)^k}{k!}
\int_{\RRR^{\n k}} dy_1\cdots dy_k \prod_{j=1}^k\Big(1-e^{-\b\f(y_j)}\Big)\r_{k}(y_1,\cdots,y_k)\Big]}\;.\label{eq:z}
\eea
\end{proof}

\bigskip

Clearly, the direct statement of Theorem \ref{thm:main} has no meaning for {\em all} values of the parameters $\r,\b,$ since
it could happen that, for given values of those parameters, there are no solutions to the Kirkwood--Salsburg 
equations obeying the hypotheses of the Theorem. In particular, we refer to translation invariance and cluster 
properties, which are only proved to be valid inside the ``gas phase region'' (small $\r\ $ and small $\b$). 
We want to stress also that, outside that region, there could be multiple--valued solutions to Eq. (\ref{eq:KS}), including 
both gaseous and liquid states. Existence and uniqueness are assured by the Theorem just for $\xi$ small, 
as explained by the next

\begin{corollario} \label{cor:main}
In the hypotheses of Theorem \ref{thm:main}, for $\xi$ sufficiently small 
the state is uniquely determined by $\r, \b,$ and it coincides with the (unique) solution of (\ref{eq:KS}).
\end{corollario}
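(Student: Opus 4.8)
The plan is to reduce the corollary to the classical uniqueness theory for the Kirkwood--Salsburg equations and then to check that the smallness of $\xi$ forces the activity $z$ into the regime where that theory applies. The direct part of Theorem \ref{thm:main} already tells us that the correlation functions of the state solve (\ref{eq:KS}) with the specific $z$ given by (\ref{eq:z}); hence uniqueness of the state splits into two tasks: first, uniqueness of solutions of (\ref{eq:KS}) for the relevant $z$ and $\b$; and second, the fact that $z$ itself is pinned down by $\r$ and $\b$. I would begin by fixing the standard weighted Banach space $\BB_\xi$ of sequences $\r=(\r_n)_{n\geq1}$ of continuous functions with norm $\|\r\|_\xi=\sup_{n\geq1}\xi^{-n}\sup_{q_1,\dots,q_n}|\r_n(q_1,\dots,q_n)|$. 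By (\ref{eq:rnbound}) together with the stability bound $e^{-\b W_{q_1}}\leq e^{2\b B}$ one has $\r_n\leq(\xi e^{2\b B})^n$ (as already noted after Theorem \ref{thm:main}), so the correlation functions of the state belong to such a space.

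Next I would recall the classical contraction argument (see \cite{Ru69}). Writing (\ref{eq:KS}) in operator form $\r=z\,\alpha+z\,\KK\r$, where $\alpha$ carries the inhomogeneous term coming from the $n=1$ equation and $\KK$ is the linear Kirkwood--Salsburg operator built from the kernels $\prod_j(1-e^{-\b\f(q_1-y_j)})$, the shift $\r_{n-1}\mapsto\r_n$, and the prefactor $e^{-\b W_{q_1}}$, one checks that $\KK$ is bounded on $\BB_\xi$: the decrease condition (\ref{eq:decrease}) makes $\int_{\RRR^\n}dy\,|1-e^{-\b\f(y)}|$ finite and stability (\ref{eq:stab}) supplies the uniform factor $e^{2\b B}$. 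Optimizing the weight $\xi$, the affine map $\r\mapsto z(\alpha+\KK\r)$ becomes a contraction on $\BB_\xi$ as soon as $|z|$ lies below an explicit threshold $z_0=z_0(\b,B,\f)$; in that regime (\ref{eq:KS}) has exactly one solution, by Banach's fixed point theorem.

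The link between the two smallness parameters is the content of the third step. Using $\r_k\leq(\xi e^{2\b B})^k$ in the denominator of (\ref{eq:z}) shows that the normalizing series there equals $1+O(\xi)$, whence $z=\r\,(1+O(\xi))$ with $\r=\r_1\leq\xi e^{2\b B}$; in particular $z\to0$ as $\xi\to0$. Therefore, for $\xi$ sufficiently small one has $|z|<z_0$, the correlation functions of the state lie in the ball on which the contraction acts, and they must coincide with the unique solution of (\ref{eq:KS}) for that value of $z$. Finally, to conclude that the state is fixed by $\r$ and $\b$ alone, I would invert the activity--density relation: the unique solution $\r^{(z)}$ depends analytically on $z$ near the origin with $\r^{(z)}_1=z+O(z^2)$, so $z\mapsto\r^{(z)}_1$ has derivative $1$ at $z=0$ and is a local diffeomorphism; hence a given small value $\r=\r_1$ is attained by a unique small $z$, and the whole sequence of correlation functions, and thus the smooth Maxwellian state, is uniquely determined by $\r$ and $\b$.

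The step I expect to be the main obstacle is not the contraction estimate, which is standard, but the bookkeeping that glues the pieces together: reconciling the a priori class in which Theorem \ref{thm:main} places the solution (weight $\xi e^{2\b B}$) with the weight in which the Kirkwood--Salsburg operator is genuinely contractive, and carrying out the inversion of the activity--density map uniformly, so that the bound $|z|<z_0$ is actually guaranteed by the hypothesis that $\xi$ is small.
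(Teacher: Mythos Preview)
Your proposal is correct and follows essentially the same route as the paper: reduce to uniqueness of the Kirkwood--Salsburg equations for small activity, and then check that small $\xi$ forces $z$ into that regime via formula (\ref{eq:z}) and the inversion of $\r=z+O(z^2)$. The only cosmetic difference is that the paper writes out the Neumann series explicitly (the Mayer expansion (\ref{eq:Mayercf})--(\ref{eq:Mayer}) with the coefficient bound $|c_{n,p}|\leq I_\b^{-(n-1)}(I_\b e^{1+2\b B})^p$ obtained by induction using stability), whereas you invoke the equivalent Banach fixed point formulation from \cite{Ru69}; the paper also makes the thresholds fully explicit, taking $\xi<(2I_\b e^{1+2\b B})^{-1}$ so that the denominator in (\ref{eq:z}) exceeds $1/2$ and hence $|z|\leq 2\xi<(I_\b e^{1+2\b B})^{-1}$, which resolves in one line the ``bookkeeping'' step you flag as the main obstacle.
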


\begin{proof}[Proof of Corollary \ref{cor:main}]
The result follows from the well known theory of convergence of the Mayer expansion for 
$z$ small \cite{Pe63a}, \cite{Ru63}, after noting from Eq. (\ref{eq:z}) that $z = O(\xi)$ for $\xi$ small. 
We sketch the proof for completeness.

By iteration of (\ref{eq:KS}) we get the formal expansions
\bea
&&\r=z \sum_{p=0}^{\infty}c_{1,p}z^p\nn\\
&&\r_n(q_1,\cdots,q_n) = z\sum_{p=0}^{\infty}c_{n,p}(q_1,\cdots,q_n)z^p\;,\ \ \ \ \ n>1\;, \label{eq:Mayercf}
\eea 
where the coefficients are defined in terms of $\b$ and $\f$ by the explicit recursive relation
\bea
&&c_{n,0} = \d_{n,1} \label{eq:Mayer} \\
&&c_{n,p+1}(q_1,\cdots,q_n) = e^{-\b W_{q_1}(q_2,\cdots,q_n)}
\Big[ \d_{n>1}c_{n-1,p}(q_2,\cdots,q_n)\nn\\
&&\ \ \ \ \ + \sum_{k=1}^{\infty}\frac{(-1)^k}{k!}\int_{\RRR^{\n k}} dy_1\cdots dy_k
\prod_{j=1}^k \Big(1-e^{-\b\f(q_1-y_j)}\Big)c_{n-1+k,p}(q_2,\cdots,q_n,y_1,\cdots,y_k)
\Big]\;, \nn
\eea
for $p\geq 0.$ In particular, it follows that $c_{n,n-1}(q_1,\cdots,q_n) = e^{-\b\sum_{i<j}^{0,n} \f(q_i-q_j)}$ 
and $c_{n,p}(q_1,\cdots,q_n) = 0$ for $p < n-1$.  

Defining
\bea
I_{\b} = \int_{\RRR^{\n}}\left|1-e^{-\b\f(x)}\right|dx\;, \label{eq:Ib}
\eea
by induction on $p$ and using the stability of the potential (which implies, for any configuration of $n$ particles,
the existence of $i\in(1,\cdots,n)$ such that $W_{q_i}(q_1,\cdots,q_{i-1},q_{i+1},\cdots,q_n) \geq -2B$) 
one finds the following estimate uniform in $q_1,\cdots,q_n\in\RRR^\n:$
\be
|c_{n,p}(q_1,\cdots,q_n)| \leq I_{\b}^{-(n-1)} (I_{\b}e^{1+2\b B})^p\;.
\ee
Hence the expansions (\ref{eq:Mayercf}) are absolutely convergent uniformly in the coordinates,
as soon as 
\be
|z| < (I_{\b}e^{1+2\b B})^{-1}\;. \label{eq:zbound}
\ee
The first equation ($n=1$) is, in this case, the expansion of $\r$ in powers of $z,$ and it is of the 
form $\r = z + O(z^2):$ thus it can be inverted for $z$ small, to determine $z$ as a function of $\r$ and $\b.$

Therefore, to obtain the corollary it is sufficient to take
\be
\xi < (2I_{\b}e^{1+2\b B})^{-1}\;. \label{eq:xibound}
\ee
In fact, with this choice the denominator in (\ref{eq:z}) is bounded from below by 
$1-I_{\b}\xi e^{2\b B} e^{I_{\b}\xi} > 1/2,$ so that $|z| \leq 2\xi$ and Eq. (\ref{eq:zbound}) is satisfied.
\end{proof}

%%%%%%%%%%%%%%%%%%%%%%%%%%%%%%%%%%%%%%%%%%%%%%%%%%

\bigskip

The proof of Theorem \ref{thm:main} can be easily adapted to cover the more general case of a non invariant
state for which translation invariance holds just as a boundary condition at infinity: 
that is the case of a system of particles in an infinite container. 
Consider an open unbounded set $\L_{\infty}\subset \RRR^\n$ with a smooth boundary $\partial\L_{\infty}$
and satisfying the following properties:

{\em (a)} $\L_{\infty}$ is polygonally connected;

{\em (b)} for any $q \in \L_{\infty},$ there exists a polygonal path $\G(q)$ connecting $q$ to $\infty$
such that 
\be
\dist(\partial\L_{\infty}, \{y \in \G(q) \mbox{ s.t. }|y|>n\}) \longrightarrow +\infty
\ee
as $n\longrightarrow+\infty$ (here $\dist$ is the usual distance between sets in $\RRR^\n$).
That describes a class of reasonable geometries for infinite containers of particles.

In the following we shall assume, for simplicity, reflecting boundaries. 
Notice that the Maxwellian assumption ensures that the correlation functions take the same value on 
configurations that correspond to the incoming and outcoming state of an elastic collision particle--wall. 
Moreover, as we will see, the value of the correlation functions on the boundary $\partial\L_{\infty}$ 
does not play any role in the proof below: different kinds of boundary condition could be also treated,
such as walls modeled by a smooth external potential.

The phase space associated to the system, 
denoted $\HH_{\L_{\infty}},$ is defined as in point 1) of Section \ref{sec:setup} with the $q_i$
restricted to $\L_{\infty}.$ All the other definitions of Section \ref{sec:setup} are extended as well to the system on $\L_{\infty},$
just by restricting the coordinates $q_i\in\L_{\infty}.$ In particular, a smooth Maxwellian state on $\HH_{\L_{\infty}}$ is a collection 
of probability measures on $\HH_{\L}$ with $\L\subset\L_{\infty}$ bounded open, satisfying properties {\bf a.}--{\bf e.} of Section 
{\ref{sec:setup}}, having correlation functions of the form ({\ref{eq:Maxwell}}) with 
$g_n\in C^1(\L_{\infty}^n)\cap C(\ol\L_{\infty}^n)$ (the bar indicates closure in the usual topology)
and satisfying the estimates (\ref{eq:rnbound}) and (\ref{eq:gradrnbound}) over $\L_{\infty}^n.$

The stationary BBGKY hierarchy of equations for such a state reduces to 
\bea
&&\nabla_{q_1} \r_n(q_1,\cdots,q_n)=-\b\Big[\nabla_{q_1}W_{q_1}(q_2,\cdots,q_n)\r_n(q_1,\cdots,q_n)\nn\\
&&\ \ \ \ \ \ \ \ \ \ \ \ \ \ \ \ \ \ \ \ \ + \int_{\L_{\infty}}dy \nabla_{q_1} \f(q_1-y)\r_{n+1}(q_1,\cdots,q_n,y)\Big]\;,
\ \ \ \ \ \ \ \ \ \ n \geq 1\;,\label{eq:grad-rhoinfcont}
\eea
for $q_1,\cdots,q_n\in\L_{\infty},$ which we shall integrate with the boundary conditions:

{\em (i)} $\r_n$ satisfy the cluster property (\ref{eq:DIS}) on $\L_{\infty}$ as soon as $\dist(B_m,\partial\L_{\infty})
\longrightarrow +\infty;$

{\em (ii)} $\r_n$ satisfy the following property which we call {\em invariance at infinity}: there exists a sequence of 
translation invariant functions $\{f_n\}_{n=1}^{\infty}, f_n:\RRR^{\n n}\rightarrow\RRR^+$ (being the correlation functions
of some state on $\HH$), a constant $\tilde C>0,$ and two monotonous decreasing functions $\tilde u(\cdot), \e(\cdot)$ 
vanishing at infinity, such that
\bea
&&|\r_n(q_0+q_1,q_0+q_2,\cdots,q_0+q_n) - f_n(q_1,q_2,\cdots,q_n)|\nn\\
&&\ \ \ \ \ \ \ \ \ \ \ \ \ \ \ \ \ \ \ \ \ \ \ \ \ \leq {\tilde C}^n \left[\tilde u(|q_0|) +
\e\left(\dist(\partial\L_{\infty}, \{q_0+q_1,\cdots,q_0+q_n\})\right)\right] \label{eq:invinfprop}
\eea
for all $q_0,q_1,\cdots,q_n$ with $q_0+q_1,\cdots,q_0+q_n\in\L_{\infty}.$ 

We will name $\r\equiv f_1(q_1).$ The following extension of Theorem \ref{thm:main} holds:
\begin{teorema} \label{thm:infcont}
If a smooth Maxwellian state on $\HH_{\L_{\infty}}$ is a stationary solution of the BBGKY hierarchy satisfying cluster
boundary conditions and invariance at infinity, then there exists a constant $z$ such that the correlation functions of the
state satisfy
\bea
&&\r_n(q_1,\cdots,q_n)= z e^{-\b W_{q_1}(q_2,\cdots,q_n)}\Big[\r_{n-1} (q_2,\cdots,q_n)\label{eq:KSinfcont}\\
&&\ \ \ \ \ \ \ \ \ \ +\sum_{m=1}^{\infty}\frac{(-1)^m}{m!}\int_{\L^{m}_{\infty}} dy_1 \cdots dy_m 
\prod_{j=1}^m \left(1-e^{-\b\f(q_1-y_j)}\right) \r_{n-1+m} (q_2,\cdots,q_n,y_1,\cdots,y_m)\Big]\;.\nn
\eea
Conversely, a smooth Maxwellian state on $\HH_{\L_{\infty}}$ with parameter $\b$
and satisfying (\ref{eq:KSinfcont}) is a stationary solution of the BBGKY hierarchy.
\end{teorema}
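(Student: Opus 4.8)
The plan is to run the proof of Theorem \ref{thm:main} essentially verbatim, with two structural modifications dictated by the geometry of $\L_{\infty}$: every integral over $\RRR^\n$ is replaced by an integral over $\L_{\infty}$, and the integration of the hierarchy along the segment $\overrightarrow{q_0q_1}$ is replaced by integration along a polygonal path contained in $\L_{\infty}$. First I would rewrite (\ref{eq:grad-rhoinfcont}) in terms of $\hat\r_n(q_1;q_2,\dots,q_n)=e^{\b W_{q_1}(q_2,\dots,q_n)}\r_n$ exactly as before, obtaining the analog of (\ref{eq:hatBBGKY}) but with the $y_1$-integral extended over $\L_{\infty}$. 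Fixing $q_0\in\L_{\infty}$ and using the polygonal connectedness (property \emph{(a)}), I would choose a polygonal path $\G$ joining $q_0$ to $q_1$ inside $\L_{\infty}$ and integrate $\nabla_{q_1}\hat\r_n$ along $\G$. Since the kernel $K_{q_0q_1}$ depends on the path only through its endpoints, the integration by parts and the $N$-fold iteration leading to (\ref{eq:prima di rem}) carry over unchanged.

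The second step is the $N\to\infty$ limit. The remainder bound (\ref{eq:rem}) is modified only by replacing $|q_1-q_0|$ with the length $\ell(\G)$ of the chosen path; as $\G$ is fixed and of finite length, the remainder still vanishes as $N\to\infty$. This produces the intermediate identity, the exact analog of (\ref{eq:KSq0}) with $\L_{\infty}$-integrals, relating the correlation functions evaluated at the endpoints $q_0$ and $q_1$. As in the translation-invariant case, the resulting equation involves only $q_0$ and $q_1$, and is therefore independent of the particular path $\G$ used to derive it.

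The genuinely new point, and the step I expect to be the main obstacle, is the limit $|q_0|\to\infty$, since translation invariance is no longer available to rewrite the correlation functions near $q_0$. Here I would send $q_0$ to infinity along the escape path furnished by property \emph{(b)}, so that $\dist(q_0,\partial\L_{\infty})\to\infty$. The two boundary conditions then do the work previously done by translation invariance: in every term the integration variables $y_j$ are confined near $q_0$ by the factors $1-e^{-\b\f(q_0-y_j)}$, hence are themselves far from $\partial\L_{\infty}$, so the cluster property \emph{(i)} factorizes $\r_{n+k}$, while the invariance at infinity \emph{(ii)} replaces the translated correlation functions by the translation invariant $f_k$ and, because $q_0$ recedes from the wall, the boundary correction $\e(\dist(\partial\L_{\infty},\cdot))$ in (\ref{eq:invinfprop}) tends to $0$. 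The error term is the analog of $\EE_{n,q_0}$ in (\ref{eq:sommaR}) augmented by the contributions proportional to $\tilde u(|q_0|)$ and $\e(\cdot)$ coming from (\ref{eq:invinfprop}); dominated convergence, justified by (\ref{eq:rnbound}), shows that it vanishes. In the limit one obtains (\ref{eq:KSinfcont}) with
\be
z = \frac{\r}{1+\sum_{k=1}^{\infty}\frac{(-1)^k}{k!}\int_{\RRR^{\n k}} dy_1\cdots dy_k \prod_{j=1}^k\Big(1-e^{-\b\f(y_j)}\Big) f_k(y_1,\cdots,y_k)}\;,
\ee
where $\r=f_1$ and the denominator is strictly positive by the argument of Appendix A applied to the state whose correlation functions are the $f_k$.

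Finally, the converse statement follows the same route as the converse of Theorem \ref{thm:main} (and Lemma \ref{lem:converse}), simply reinstating the integrals over $\L_{\infty}$: I would verify that a smooth Maxwellian state on $\HH_{\L_{\infty}}$ satisfying (\ref{eq:KSinfcont}) differentiates back to (\ref{eq:grad-rhoinfcont}). The delicate ingredient throughout remains the simultaneous control of the two boundary effects in the $q_0\to\infty$ limit: one must be able to push $q_0$ to infinity while keeping both $q_0$ and the effective integration region far from $\partial\L_{\infty}$, which is precisely what the geometric hypotheses \emph{(a)} and \emph{(b)} on the container are designed to guarantee.
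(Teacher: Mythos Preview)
Your proposal is correct and follows essentially the same route as the paper: repeat the proof of Theorem~\ref{thm:main} up to the symmetric identity (\ref{eq:KSq0}) with $\RRR^\n$ replaced by $\L_{\infty}$ and the segment replaced by a polygonal path in $\L_{\infty}$, then send $q_0\to\infty$ along the escape path of property \emph{(b)} and use the cluster property \emph{(i)} together with invariance at infinity \emph{(ii)} to obtain (\ref{eq:KSinfcont}) with $z$ given exactly by your formula in terms of the $f_k$. Your remarks on the path-length replacement in the remainder bound and on applying Appendix~A to the limiting state $\{f_k\}$ are precisely the small adaptations the paper makes (or leaves implicit).
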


These are the Kirkwood--Salsburg equations in the infinite container. For $n=1$ the first term in the right hand side
has to be interpreted as $z.$ In this case the equation is not independent on $q_1:$ a definition of $z$ in terms of explicitly
constant functions follows using (\ref{eq:invinfprop}), by sending $q_1$ to infinity in such a way that 
$\dist(q_1,\partial\L_{\infty})\longrightarrow +\infty,$ which is certainly possible in our assumption
{\em (b)} on the geometry of the container (see Eq. (\ref{eq:zinfcont}) below).

\begin{proof}[Proof of Theorem \ref{thm:infcont}] 
All that is said in the proof of Theorem \ref{thm:main} up to the formula (\ref{eq:KSq0}) can be repeated
here by restricting the coordinates to $\L_{\infty},$ substituting the integration region $\RRR^\n$ with $\L_{\infty},$
and the straight line $\overrightarrow{q_0q_1}$ with a polygonal path entirely contained in $\L_{\infty}$ connecting $q_0$ 
to $q_1.$ We obtain that Eq. (\ref{eq:KSq0}) is valid with $q_0\in\L_{\infty}$ and the integrals restricted to $\L_{\infty}^k.$
Take the limit of this expression as $|q_0|\rightarrow\infty$ with $q_0$ moving along a path $\G(q_1)$ defined as in point {\em (b)}
above: properties {\em (i)} and {\em (ii)} then imply 
\bea
&&\r_n(q_1,\cdots,q_n) \Big[1+\sum_{k=1}^{\infty}\frac{(-1)^k}{k!}
\int_{\RRR^{\n k}} dy_1\cdots dy_k \prod_{j=1}^k\Big(1-e^{-\b\f(y_j)}\Big)f_{k}(y_1,\cdots,y_k)\Big]\\
&&=\r e^{-\b W_{q_1}(q_2,\cdots,q_n)}\sum_{k=0}^{\infty}\frac{(-1)^k}{k!}
\int_{\L_{\infty}^k} dy_1\cdots dy_k \prod_{j=1}^k\Big(1-e^{-\b\f(q_1-y_j)}\Big)
\r_{n-1+k}(q_2,\cdots,q_n,y_1,\cdots,y_k)\;; \nn
\eea
this can be shown via a dominated convergence argument as after formula (\ref{eq:KSq0}) in Theorem \ref{thm:main}.
The factor in the square brackets on the left hand side is a strictly positive constant depending on $\b$ and 
$f_k, k\geq 1$ (apply the discussion in Appendix A). The direct statement of the Theorem is thus proved by calling
\bea
z = \frac{\r}{\Big[1+\sum_{k=1}^{\infty}\frac{(-1)^k}{k!}
\int_{\RRR^{\n k}} dy_1\cdots dy_k \prod_{j=1}^k\Big(1-e^{-\b\f(y_j)}\Big)f_{k}(y_1,\cdots,y_k)\Big]}\;.\label{eq:zinfcont}
\eea
\end{proof}

The proof of Corollary \ref{cor:main} can be also adapted to the present case in a straightforward way. Relations 
(\ref{eq:Mayercf})--(\ref{eq:Mayer}) with the constants $\r, c_{1,p}$ replaced by functions $\r_1(q_1), c_{1,p}(q_1)$ 
and all the involved coordinates
restricted to $\L_{\infty},$ show that if $\xi$ is taken as in (\ref{eq:xibound}), Equations (\ref{eq:KSinfcont}) have a unique 
solution determined by $\r = f_1$ and $\b.$ Notice that this solution is actually coincident with that of the infinite 
space Kirkwood--Salsburg equations (\ref{eq:KS}), in the limit of infinite distance of the coordinates 
from $\partial\L_{\infty}.$
The activity $z$ in $\L_{\infty}$ is given by the inversion of the power series
\be
\r=z \sum_{p=0}^{\infty}
\left(\underset{\underset{\dist(\partial\L_{\infty},q)\rightarrow\infty}{|q|\rightarrow\infty}}{\lim} c_{1,p}(q)\right)z^p\;,
\ee
where the coefficients of the expansion are independent on the way the limit is taken.

\begin{corollario} \label{cor:infcont}
In the hypotheses of Theorem \ref{thm:infcont}, for $\xi$ sufficiently small 
the state is uniquely determined by $\r, \b,$ and it coincides with the (unique) solution of (\ref{eq:KSinfcont}). $\hfill\Box$
\end{corollario}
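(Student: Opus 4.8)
The plan is to follow the proof of Corollary~\ref{cor:main} line by line, the only differences being that the Mayer coefficients now carry a position dependence and that every integration is confined to $\L_\infty$. First I would iterate (\ref{eq:KSinfcont}) exactly as in (\ref{eq:Mayercf})--(\ref{eq:Mayer}), obtaining $\r_n(q_1,\cdots,q_n)=z\sum_{p\geq 0}c_{n,p}(q_1,\cdots,q_n)z^p$ with coefficients obeying the same recursion (\ref{eq:Mayer}) but with each $\int_{\RRR^{\n k}}$ replaced by $\int_{\L_\infty^k}$. Since $\L_\infty\subseteq\RRR^\n$ and the factors $|1-e^{-\b\f(q_1-y_j)}|$ are nonnegative, every such restricted integral is bounded by its counterpart over $\RRR^\n$; hence the induction on $p$ that uses stability is unchanged and yields the identical estimate $|c_{n,p}|\leq I_\b^{-(n-1)}(I_\b e^{1+2\b B})^p$, uniform in the coordinates and in the choice of $\L_\infty$. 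The expansions therefore converge absolutely and uniformly for $|z|<(I_\b e^{1+2\b B})^{-1}$, precisely as before.

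The one genuinely new point is the equation for $n=1$, which in a container is no longer constant in $q_1$. To pin down the activity I would send $q_1$ to infinity along a path with $\dist(q_1,\partial\L_\infty)\to+\infty$, which assumption~(b) guarantees to exist. Along such a path invariance at infinity (property~(ii)) forces $\r_1(q_1)\to f_1=\r$, while each $c_{1,p}(q_1)$ tends to a limit $c_{1,p}^\infty$. The existence and path-independence of these limits come from the decrease property (\ref{eq:decrease}): for $q_1$ deep inside $\L_\infty$ the integrands in (\ref{eq:Mayer}) are effectively supported in a neighbourhood of $q_1$ that lies entirely in $\L_\infty$, so the truncation of the domain to $\L_\infty^k$ becomes indistinguishable from integration over $\RRR^{\n k}$ and the $c_{1,p}^\infty$ coincide with the infinite--space coefficients. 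This yields the scalar series $\r=z\sum_{p\geq 0}c_{1,p}^\infty z^p=z+O(z^2)$ recorded just above the statement.

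Finally I would invert this series: being of the form $\r=z+O(z^2)$, it determines $z$ uniquely as a function of $\r$ and $\b$ for $\r$ small, and the formula (\ref{eq:zinfcont}) shows $z=O(\xi)$. Taking $\xi<(2I_\b e^{1+2\b B})^{-1}$ as in (\ref{eq:xibound}) then gives $|z|<(I_\b e^{1+2\b B})^{-1}$, so the convergence condition is met and the entire family $\{\r_n\}$ is fixed by $\r,\b$ through the convergent expansion; this is the unique solution of (\ref{eq:KSinfcont}), which moreover agrees with the infinite--space solution of (\ref{eq:KS}) far from $\partial\L_\infty$. I expect the main obstacle to be exactly the control of the limit coefficients $c_{1,p}^\infty$: one must check both that the boundary truncation of the integration domain disappears as $q_1\to\infty$ and that the limit is independent of the path $\G(q_1)$, and it is here that the geometric hypothesis~(b) and the decrease property (\ref{eq:decrease}) are indispensable.
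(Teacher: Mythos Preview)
Your proposal is correct and mirrors the paper's own argument essentially verbatim: the paper states (in the paragraph immediately preceding the corollary) that the proof of Corollary~\ref{cor:main} carries over by replacing the constants $\r,c_{1,p}$ with functions $\r_1(q_1),c_{1,p}(q_1)$, restricting all coordinates to $\L_\infty$, and obtaining $z$ by inversion of the power series whose coefficients are the limits $\lim_{|q|\to\infty,\ \dist(\partial\L_\infty,q)\to\infty}c_{1,p}(q)$, noted to be independent of the path. Your treatment of the limit coefficients $c_{1,p}^\infty$ and their identification with the infinite--space coefficients is if anything more explicit than what the paper writes.
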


%%%%%%%%%%%%%%%%%%%%%%%%%%%%%%%%%%%%%%%%%%%%%%%%
%%%%%%%%%%%%%%%%%%%%%%%%%%%%%%%%%%%%%%%%%%%%%%%%
%%%%%%%%%%%%%%%%%%%%%%%%%%%%%%%%%%%%%%%%%%%%%%%%
%%%%%%%%%%%%%%%%%%%%%%%%%%%%%%%%%%%%%%%%%%%%%%%%
%%%%%%%%%%%%%%%%%%%%%%%%%%%%%%%%%%%%%%%%%%%%%%%%

\section{The hard core limit} \label{sec:hc}

In this section we deal with the infinite system of hard core particles with diameter $d>0.$
The Hamiltonian $H(X)$ is defined by $(\ref{eq:Ham})$ with
\be
\f = \f_d(q)=
\left\{\begin{array}{ccc}
\infty, & &  |q| < d\\ 0, & & |q| \geq d
\end{array}\right.\;.
\ee
Definitions of Section \ref{sec:setup} can be easily extended: see \cite{AGGLM}.
In particular, the phase space is $\HH_d =\Big\{X = \{x_i\}_{i=1}^{\infty} = \{(q_i,p_i)\}_{i=1}^{\infty},
x_i\in\RRR^{\nu}\times\RRR^{\nu} \ \Big|\  |q_i-q_j|\geq d \mbox{ for } i\neq j\Big\}$\footnote{
Here we are considering for simplicity pure hard core systems in infinite space; an additional 
potential of the class introduced in Section \ref{sec:setup} (but also possibly singular in $|q|=d$)
can be added in the discussion of the present section in an obvious way, and more general geometries 
can be considered along the lines of Theorem \ref{thm:infcont} and the discussion thereof.},
while a state $\mu$ is a probability measure on the Borel sets of $\HH_d$ having correlation
functions 
\be
\ol\r_n: \HH_d^{(n)} \rightarrow \RRR^+\;,
\ee
where
\be
\HH_d^{(n)} = \Big\{\{x_i\}_{i=0}^n = \{(q_i,p_i)\}_{i=0}^n,
x_i\in\RRR^{\nu}\times\RRR^{\nu} \ \Big|\  |q_i-q_j| > d \mbox{ for } i\neq j\Big\}\;. 
\ee
The state $\mu$ is called smooth Maxwellian if its correlation functions are as in the first line of Eq. (\ref{eq:Maxwell}), 
they are $C(\ol\HH_d^{(n)})$ and piecewise $C^1(\HH_d^{(n)}),$
and satisfy bounds as in (\ref{eq:rnbound}), (\ref{eq:gradrnbound}) (without the exponentials).

From now on we abbreviate
\be
\RRR^{n\nu}_d = \Big\{(q_1,\cdots,q_n)\in\RRR^{\nu n}\ \Big|\ |q_i-q_j| > d \mbox{ for } i\neq j\Big\}\;.
\ee
Moreover, we let
\be
\O_i(q_1,\cdots,q_n) = \Big\{\o \in S^{\nu-1} \ \Big|\ |q_i+dw-q_j| > d \mbox{ for every $j\in(1,\cdots,n), j\neq i$}\Big\}\;.
\label{eq:Oi}
\ee
Following \cite{Cerc75} (see also \cite{IP} for a more rigorous treatment), 
we say that the smooth Maxwellian state is a stationary solution of the {\em hard core BBGKY hierarchy} if, for $n\geq 1,$
its spatial correlation functions $\r_n:\RRR^{\nu n}_d \rightarrow\RRR^+, \r_n\in C(\ol\RRR^{\nu n}_d)$ 
and piecewise $C^1(\RRR^{\nu n}_d),$ satisfy
\be
\nabla_{q_1}\r_n(q_1,\cdots,q_n) 
= - d^{\nu-1} \int_{\O_1(q_1,\cdots,q_n)}d\o \o \r_{n+1}(q_1,\cdots,q_n,q_1+d \o)\;,\label{eq:hcgr}
\ee
where $d\o$ denotes the surface element on the unit sphere (for $\nu =1$ the integral reduces to $\sum_{\o=\pm 1}$). 
This is equivalent to say that, for $n \geq 1,$
\bea
&&\sum_{i=1}^n p_i\cdot \nabla_{q_i}\ol\r_n(x_1,\cdots,x_n) \nn\\
&&= - d^{\nu-1} \sum_{i=1}^n \int_{\O_i(q_1,\cdots,q_n)\times \RRR^{\nu}}d\o d\pi\  \o \cdot (p_i - \pi) 
\ol\r_{n+1}(x_1,\cdots,x_n,q_i+d \o,\pi) \label{eq:BBGKYhc}
\eea
for any $(x_1,\cdots,x_n)\in\HH_d^{(n)},$ where $\o$ varies over $\O_i(q_1,\cdots,q_n)$ and 
$\pi$ varies in $\RRR^\nu\ $\footnote{These equations, as derived in \cite{Cerc75}, should be complemented 
with the boundary conditions imposing that the correlation functions take the same value on configurations that 
correspond to the incoming and outcoming state of a collision; which of course is guaranteed by the Maxwellian assumption.}.
Observe that in the hard core case, if the state is also invariant, the equations are parametrized by only one positive
(potential--independent) constant, $\r \equiv \r_1(q_1);$ i.e. $\b$ does not appear. 
Finally, notice that the cluster property can be formulated as in (\ref{eq:DIS}) for
the functions defined on $\RRR^{\nu (n+m)}_d.$

The direct integration procedure established in the previous section cannot be applied to solve
the hierarchy (\ref{eq:hcgr}), as already stressed in the Remark on page \pageref{thm:main}, the difficulty coming from the presence of ``holes'' in the phase space.
However, from the results stated in Sec. \ref{sec:mr} it follows that the solution of (\ref{eq:hcgr}) describing the 
equilibrium correlation functions of the hard core system, 
defined (uniquely for $\r$ small) via its corresponding Kirkwood--Salsburg equations,
can be approximated with solutions of the smooth hierarchies (\ref{eq:grad-rho}),
with few restrictions on the form of the regular potentials that can be used. 

More precisely, let $\f^{(\e)}\in C^{1}(\RRR^{\n}), \e>0,$ be any family of radial, positive
potentials with compact support, converging pointwise to the hard core potential:
\be
\f^{(\e)}(q) \underset{\e\to 0}{\longrightarrow} \f_d(q),\ \ \ \ \ \mbox{for }|q| \neq d\;. \label{eq:fe}
\ee
Denote $\MM^{(\b,\xi)}, \b,\xi>0,$ the set of all smooth Maxwellian and invariant states on $\HH$ 
with parameter $\b,$ and spatial correlation functions of class $C^1$ in its variables, 
obeying estimates of the form~$\r_n\leq\xi^n, |\nabla_{q_i}\r_n|\leq C_n\xi^n$ 
(with possibly different constants $C_n$) and satisfying the cluster property (\ref{eq:DIS}).
Indicate $B_d(q)$ the ball with radius $d$ and center $q\in\RRR^{\n},$ and
\be
\RRR^{m\n}_d(q_1,\cdots,q_n) = \left\{(y_1,\cdots,y_m)\in\RRR^{\n m}_d\ \Big|\ (q_1,\cdots,q_n,y_1,\cdots,y_m) 
\in \RRR^{\n(n+m)}_d\right\}\;.
\ee
Then the following holds:

\begin{teorema} \label{prop:HC}
Fix $\b>0,$ and sufficiently small $\r>0.$ Then there exists a (small) constant $\xi>\r$ such that:

\ni (i) for any $\e>0$ there is a unique state in $\MM^{(\b,\xi)}$ with spatial correlation functions 
$\{\r_n^{(\e)}\}_{n=1}^{\infty}$ solving the hierarchy (\ref{eq:grad-rho}) 
with potential $\f^{(\e)},$ and $\r_1^{(\e)}(q_1)\equiv\r;$

\ni (ii) it is $|\r_n^{(\e)}(q_1,\cdots,q_n)| \leq (2\xi)^n e^{-\b \sum_{i\neq j}\f^{(\e)}(q_i-q_j)}$ uniformly in $\e$, $n\geq1$ and $(q_1,\cdots,q_n)\in\RRR^{\n n};$ moreover,
\be
\r_n^{(\e)}(q_1,\cdots,q_n)  \underset{\e\to 0}{\longrightarrow} \r_n(q_1,\cdots,q_n)
\ee
uniformly in every compact subset of $\RRR^{\n n}_d,$ where the functions 
$\r_n:\RRR^{\n n}_d\rightarrow\RRR^+$ are given by the hard core Kirkwood--Salsburg equations:
\bea
&&\r_n(q_1,\cdots,q_n)= z \Big[\r_{n-1} (q_2,\cdots,q_n)\label{eq:hcKS}\\
&&\ \ \ \ \ \ \ \ +\sum_{m=1}^{\infty}\frac{(-1)^m}{m!}\int_{\RRR^{m\n}_d(q_2,\cdots,q_n)\bigcap (B_d(q_1))^m} 
dy_1 \cdots dy_m \r_{n-1+m} (q_2,\cdots,q_n,y_1,\cdots,y_m)\Big]\;;\nn
\eea
(iii) the limit functions $\r_n$ satisfy the hard core hierarchy (\ref{eq:hcgr}).
\end{teorema}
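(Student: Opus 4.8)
The plan is to read off (i) and the uniform bound in (ii) from the smooth theory of Section~\ref{sec:mr} applied to each $\f^{(\e)}$, to obtain the convergence in (ii) by a termwise limit of the Mayer expansion, and to establish (iii) by passing to the limit in the differentiated equation (or, equivalently, by differentiating the hard core Kirkwood--Salsburg relation directly).

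For fixed $\e$ the potential $\f^{(\e)}$ is $C^1$ and compactly supported, hence it satisfies the decrease condition (\ref{eq:decrease}); being nonnegative it is stable with constant $B=0$. So Theorem~\ref{thm:main} and Corollary~\ref{cor:main} apply verbatim: for $\xi$ below the threshold (\ref{eq:xibound}), which here reads $\xi<(2I_\b^{(\e)}e)^{-1}$ with $I_\b^{(\e)}=\int_{\RRR^\n}|1-e^{-\b\f^{(\e)}}|$, there is a unique smooth Maxwellian invariant state solving (\ref{eq:grad-rho}) with $\r_1^{(\e)}\equiv\r$, coinciding with the unique solution of the Kirkwood--Salsburg equations. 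The only genuinely new point is the uniformity of $\xi$ in $\e$: since $0\le 1-e^{-\b\f^{(\e)}}\le 1$ and $1-e^{-\b\f^{(\e)}}\to\mathbf 1_{\{|q|<d\}}$ pointwise off $|q|=d$, the family $\{I_\b^{(\e)}\}$ is uniformly bounded (this is the ``mild restriction'' on the supports, e.g.\ a fixed common bounded support), say $\bar I_\b=\sup_\e I_\b^{(\e)}<\infty$. I would then fix once and for all $\xi<(2\bar I_\b e)^{-1}$; choosing $\r$ small so that $\xi>\r$ is compatible (recall $\r=z+O(z^2)$ and $|z|\le 2\xi$), this yields (i) with $\xi$ independent of $\e$.

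The uniform bound in (ii) follows from the coefficient estimate of Corollary~\ref{cor:main}, $|c_{n,p}^{(\e)}|\le (I_\b^{(\e)})^{-(n-1)}(I_\b^{(\e)}e)^p$, together with $|z^{(\e)}|\le 2\xi$: resumming the series (\ref{eq:Mayercf}) controls the regular part $g_n^{(\e)}=e^{\b\sum_{i<j}\f^{(\e)}}\r_n^{(\e)}$ by a uniform constant to the $n$-th power, giving the stated exponential bound on $\r_n^{(\e)}$. For the convergence I would argue termwise in the $z$-expansion. Using $1-e^{-\b\f^{(\e)}}\to\mathbf 1_{\{|q|<d\}}$ and dominated convergence in the recursion (\ref{eq:Mayer}), each coefficient $c_{n,p}^{(\e)}\to c_{n,p}$, the Mayer coefficient of the hard core equations, uniformly on compact subsets of $\RRR^{\n n}_d$ (where mutual distances stay bounded away from $d$). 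The activity converges, $z^{(\e)}\to z$, by continuity of the inversion of $\r=z\sum_p c_{1,p}^{(\e)}z^p$. The uniform coefficient bound then licenses dominated convergence over the $z$-series, giving $\r_n^{(\e)}\to\r_n$ uniformly on compacta. Passing to the limit directly in (\ref{eq:KS}) --- the product $\prod_j(1-e^{-\b\f^{(\e)}}(q_1-y_j))$ localizing the $y_j$ onto $B_d(q_1)$ and the surviving factors of $\r_{n-1+m}$ enforcing the hard core constraints --- identifies the domain $\RRR^{m\n}_d(q_2,\dots,q_n)\cap(B_d(q_1))^m$ and shows $\r_n$ solves (\ref{eq:hcKS}).

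To show that the limit solves (\ref{eq:hcgr}) I would pass to the limit in the smooth hierarchy written for $\f^{(\e)}$. Rewriting its right-hand side as in (\ref{eq:hatBBGKY}), it is an integral against $\nabla_{q_1}(1-e^{-\b\f^{(\e)}}(q_1-y))$; as $\e\to0$ this vector field concentrates on the collision sphere $|q_1-y|=d$ and converges, as a vector-valued measure, to $d^{\n-1}\o\,\delta_{\{|q_1-y|=d\}}$ with $\o$ the outer normal. Testing against the uniformly bounded, convergent $\hat\r_{n+1}^{(\e)}$ reproduces exactly the surface integral over $\O_1(q_1,\dots,q_n)$ in (\ref{eq:hcgr}). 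Equivalently one proves (iii) intrinsically by differentiating (\ref{eq:hcKS}) in $q_1$: the dependence on $q_1$ sits in the moving balls $B_d(q_1)$, and the boundary term generated by their motion is precisely the collision-sphere integral, exactly as in the converse direction of Theorem~\ref{thm:main} (and Lemma~\ref{lem:converse}). The main obstacle is this last part. Both routes require controlling a boundary layer: the gradients of the approximating Mayer functions concentrate near $|q|=d$, and one must prove that they converge to the oriented surface measure with the correct weight $d^{\n-1}$ and direction $\o$, while commuting the limit with the $y$-integration against $\r_{n+1}^{(\e)}$, which is only continuous up to the core boundary and generically jumps across it. Making this rigorous, uniformly in the remaining variables $q_2,\dots,q_n$ and matching the geometry of $\O_1$, is the delicate step; by contrast the uniform bounds and the termwise Mayer convergence underlying (i)--(ii) are routine once the uniform control of $I_\b^{(\e)}$ is in hand.
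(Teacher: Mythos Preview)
Your outline for (i), the termwise Mayer convergence in (ii), and (iii) matches the paper closely. In particular, for (iii) the paper takes exactly your second route: it differentiates the hard core Kirkwood--Salsburg relation (\ref{eq:hcKS}) in $q_1$ (this is Lemma~\ref{lem:converse}), the moving boundary of $B_d(q_1)$ producing the surface integral over $\O_1$. The boundary--layer difficulties you correctly flag for the first route are therefore simply bypassed, and there is no need to make the concentration of $\nabla(1-e^{-\b\f^{(\e)}})$ onto the sphere rigorous.

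There is, however, a genuine gap in your derivation of the uniform exponential bound in (ii). The estimate $|c_{n,p}^{(\e)}|\le (I_\b^{(\e)})^{-(n-1)}(I_\b^{(\e)}e)^p$ from Corollary~\ref{cor:main} bounds the coefficients of $\r_n^{(\e)}$, not of $g_n^{(\e)}=e^{\b\sum_{i<j}\f^{(\e)}}\r_n^{(\e)}$; resumming it yields only $|\r_n^{(\e)}|\le C^n$, not $|\r_n^{(\e)}|\le C^n e^{-\b\sum_{i<j}\f^{(\e)}}$. Since $\f^{(\e)}\to+\infty$ inside the core, that exponential factor is the whole content of the bound (it is what forces $\r_n^{(\e)}\to 0$ on overlapping configurations, uniformly in $\e$), and it is not recoverable from the generic Ruelle--Penrose coefficient estimate. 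The paper supplies it through a different mechanism, specific to \emph{positive} potentials: by Groeneveld's theorem the Mayer coefficients $c_{n,p}^{(\e)}$ have alternating signs in $p$, and Penrose's alternating bound property then gives
\[
\r_n^{(\e)}(q_1,\dots,q_n)\ <\ z_\e^{\,n}\,c_{n,n-1}^{(\e)}(q_1,\dots,q_n)\ =\ z_\e^{\,n}\,e^{-\b\sum_{i<j}\f^{(\e)}(q_i-q_j)}\;,
\]
i.e.\ the full series is dominated by its leading term. Together with $|z_\e|\le 2\xi$ this is exactly (\ref{eq:Groen}) and the statement of (ii).
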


Notice that the sum in the right hand side of (\ref{eq:hcKS}) is finite, because of the hard core exclusion.
From points {\em (ii)} and {\em (iii)} of the theorem, and the known theory of equations
(\ref{eq:hcKS}) for small densities, it follows that the limit functions $\r_n$ provide a smooth Maxwellian 
invariant state on $\HH_d$ which is a solution of the stationary hard core BBGKY hierarchy with cluster 
boundary conditions.

It would be interesting to understand if it is possible to work out an iterative procedure that 
integrates Eq. (\ref{eq:hcgr}) directly (without assuming boundary conditions), as we are able 
to do in the smooth case. This is, as far as we know, an open problem.
A direct integration can be carried out for $\r$ small in the case $\nu = 1:$ we discuss 
it in Appendix B.  

\bigskip

\begin{proof}[Proof of Theorem \ref{prop:HC}]
Applying the direct statement of Theorem \ref{thm:main}, we have that any state in $\MM^{(\b,\xi)}$
with fixed density $\r<\xi,$ solving the stationary BBGKY hierarchy with interaction $\f^{(\e)},$
satisfies also Eq. (\ref{eq:KS}) with the same interaction, for some value of the activity $z_\e.$ 
By the proof of Corollary \ref{cor:main}, this
last set of equations has a unique solution if $\xi$ is taken as in (\ref{eq:xibound}). Thus point {\em (i)}
follows by chosing $\xi$ (hence $\r$) in such a way that
\be
\xi < \frac{1}{2e \sup_{\e>0} \int_{\RRR^{\n}}\left(1-e^{-\b\f^{(\e)}(x)}\right)dx}\;. \label{eq:boundxie}
\ee

The solution $\r_n^{(\e)}$ for given $\e$ can be expanded in absolutely convergent power series of the activity, 
so that we have formula (\ref{eq:Mayercf}) with $z$ replaced by $z_\e,$ a superscript $(\e)$
added to $\r_n$ and coefficients of the expansions $c_{1,p}^{(\e)}, c_{n,p}^{(\e)}$
defined by Eq. (\ref{eq:Mayer}) with potential $\f^{(\e)}.$ 
Since $\f^{(\e)}$ is positive, it follows
(see \cite{Gr62}) that the coefficients of the series expansions have alternating signs, and that the 
same expansions have the {\em alternating bound property} \cite{Pe63b}, which means in particular
that, for $z_\e>0$ (which is certainly true if $\r$ is small enough for all $\e>0$),
they can be bounded with their leading terms as:
\be
\r_n^{(\e)} < z_{\e}^n c_{n,n-1} = z_{\e}^n e^{-\b\sum_{i<j}^{0,n} \f^{(\e)}(q_i-q_j)}\;. \label{eq:Groen}
\ee
This, together with (\ref{eq:boundxie}) and (\ref{eq:z}), gives the estimate of point {\em (ii)} of the theorem.

Assuming by induction on $p$ that $c_{n,p}^{(\e)}\rightarrow c_{n,p}^{(0)}$
as $\e\rightarrow 0,$ where $c_{n,p}^{(0)}$ are the coefficients of the formal expansion obtained by iteration
of Eq. (\ref{eq:hcKS}), we obtain from (\ref{eq:Mayer}) that for $|q_i-q_j| > d$:
\bea
&&\lim_{\e\rightarrow 0}c_{n,p+1}^{(\e)}(q_1,\cdots,q_n)
=  \Big[ \d_{n>1}c_{n-1,p}^{(0)}(q_2,\cdots,q_n)\nn\\
&&\ \ \ \ \ \ \ \ + \sum_{k=1}^{\infty}\frac{(-1)^k}{k!}
\int_{\RRR^{k\n}_d(q_2,\cdots,q_n)\bigcap (B_d(q_1))^m} dy_1\cdots dy_k
c_{n-1+k,p}^{(0)}(q_2,\cdots,q_n,y_1,\cdots,y_k)\Big]\nn\\
&&\ \ \ \ \ \equiv c_{n,p+1}^{(0)}(q_1,\cdots,q_n)\;.
\eea
This ends the proof of point {\em (ii)}.

Point {\em (iii)} is now a particular case of the following Lemma, which is the analogous of the 
converse statement of Theorem \ref{thm:main}:
\begin{lemma} \label{lem:converse}
If a smooth Maxwellian state on $\HH_d$ satisfies (\ref{eq:hcKS}), then it is a stationary solution of the hard core
BBGKY.
\end{lemma}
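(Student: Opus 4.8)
The plan is to prove the Lemma by differentiating the hard core Kirkwood--Salsburg equation (\ref{eq:hcKS}) directly with respect to $q_1$ and recognizing the outcome as the hard core hierarchy (\ref{eq:hcgr}). This is meant to be the exact analogue of the smooth converse statement of Theorem \ref{thm:main}: there $q_1$ enters the right hand side of (\ref{eq:KS}) through the prefactor $e^{-\b W_{q_1}}$ and the Mayer factors $1-e^{-\b\f(q_1-y_j)}$, whose $q_1$--gradients drive the resummation; here, since $\f=\f_d$, the prefactor $e^{-\b W_{q_1}(q_2,\dots,q_n)}$ equals $1$ on the admissible set $\RRR^{\n n}_d$, and each Mayer factor $1-e^{-\b\f_d(q_1-y_j)}$ is merely the indicator of the ball $B_d(q_1)$. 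Hence $q_1$ enters the right hand side of (\ref{eq:hcKS}) only through the integration domain, via the $m$ constraints $|y_j-q_1|<d$, while the integrand $\r_{n-1+m}(q_2,\dots,q_n,y_1,\dots,y_m)$ is independent of $q_1$. I note at the outset that the sum over $m$ is finite: by a packing argument one can place only boundedly many points in $B_d(q_1)$ that are pairwise at distance larger than $d$, so term--by--term differentiation raises no convergence issue.

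The core computation is a Leibniz (Reynolds transport) formula for the moving ball. For $f$ independent of $q_1$,
\[
\nabla_{q_1}\int_{|y-q_1|<d} dy\, f(y) = d^{\nu-1}\int_{S^{\nu-1}} d\o\, \o\, f(q_1+d\o)\,,
\]
the surface term coming from the boundary sphere $y=q_1+d\o$, which has outward normal $\o$ and moves rigidly with $q_1$ (one checks the case $\nu=1$ by hand). Applying this to the domain of the $m$--th term of (\ref{eq:hcKS}), only the $m$ moving spheres $|y_j-q_1|=d$ contribute, the fixed hard core walls $|y_i-y_j|=d$ and $|y_j-q_k|=d$ producing no surface term. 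By the symmetry of $\r_{n-1+m}$ in $y_1,\dots,y_m$ the $m$ contributions coincide, yielding a factor $m$; the sphere point $q_1+d\o$ is an actual boundary point of the admissible region precisely when it is farther than $d$ from each of $q_2,\dots,q_n$, i.e. when $\o\in\O_1(q_1,\dots,q_n)$ as defined in (\ref{eq:Oi}); and the surviving variables range over the configurations with the extracted $y$ frozen at $q_1+d\o$.

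Carrying this out and reindexing $m\mapsto m'+1$ yields
\[
\nabla_{q_1}\r_n = -z\, d^{\nu-1}\int_{\O_1(q_1,\dots,q_n)} d\o\, \o\, G(\o)\,,
\]
where $G(\o)$ is the series produced by the reindexing. The final step is to identify $G(\o)$: its leading ($m'=0$) term is $\r_n(q_1+d\o,q_2,\dots,q_n)$, and its general term integrates $\r_{n+m'}(q_1+d\o,q_2,\dots,q_n,y_1,\dots,y_{m'})$ over the $y_i\in B_d(q_1)$ that avoid $q_1+d\o,q_2,\dots,q_n$. Up to the symmetry of the correlation functions this is exactly the right hand side of the hard core Kirkwood--Salsburg equation (\ref{eq:hcKS}) written for $\r_{n+1}$ with $q_1$ as the distinguished first particle and $q_1+d\o,q_2,\dots,q_n$ as the remaining $n$ points. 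Therefore $z\,G(\o)=\r_{n+1}(q_1,\dots,q_n,q_1+d\o)$, and substituting gives precisely (\ref{eq:hcgr}).

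The main obstacle is the rigorous justification of the transport formula under the stated regularity: the $\r_n$ are only continuous up to the hard core boundary and piecewise $C^1$ inside, and the moving ball boundary meets the fixed hard core walls along lower--dimensional sets. One must verify that these corners carry no contribution (they form a null set for the relevant surface measure) and that differentiation under the integral sign is licit on the open admissible region, using the bounds on $\r_n$ and $\nabla\r_n$ from the smooth Maxwellian definition adapted to $\HH_d$. Once the transport formula is in place, the matching of integration domains and the identification of $G(\o)$ with a Kirkwood--Salsburg right hand side are purely combinatorial bookkeeping, made straightforward by the finiteness of the $m$--sum.
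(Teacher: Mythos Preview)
Your proposal is correct and follows essentially the same route as the paper: differentiate (\ref{eq:hcKS}) in $q_1$, convert the domain dependence into a surface integral via the transport formula, use symmetry to extract a factor $m$, reindex, and identify the outcome with the Kirkwood--Salsburg relation for $\r_{n+1}$ (the paper notes explicitly that this last identification uses the continuous extension of (\ref{eq:hcKS}) to the boundary configuration $|q_1-(q_1+d\o)|=d$). Your discussion of the regularity obstacle is in fact somewhat more careful than the paper's, which simply asserts that the gradient of the ball--minus--balls integral equals the stated surface term.
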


{\em Proof of Lemma \ref{lem:converse}.}
We compute the gradient with respect to $q_1$
of expression (\ref{eq:hcKS}), in a configuration $(q_1,\cdots,q_n)\in\RRR^{\n n}_d.$
Remind that the series in the right hand side is actually a finite sum. We have
\bea
&&\nabla_{q_1}\r_n(q_1,\cdots,q_n) \\
&&=z\sum_{k=1}^{\infty}\frac{(-1)^k}{k!}\nabla_{q_1}
\int_{\RRR^{k\n}_d(q_2,\cdots,q_n)\bigcap (B_d(q_1))^k} dy_1 \cdots dy_k 
\r_{n-1+k} (q_2,\cdots,q_n,y_1,\cdots,y_k)\nn\\
&&= z\sum_{k=1}^{\infty}\frac{(-1)^k}{k!}k
\int_{\RRR^{(k-1)\n}_d(q_2,\cdots,q_n)\bigcap (B_d(q_1))^{k-1}} dy_1 \cdots dy_{k-1} \nn\\
&&\ \ \ \ \ \ \ \ \ \ \cdot\nabla_{q_1}\int_{\RRR^{\n}_d(q_2,\cdots,q_n,y_1,\cdots,y_{k-1})\bigcap B_d(q_1)} 
dy^* \r_{n-1+k} (q_2,\cdots,q_n,y^*,y_1,\cdots,y_{k-1})\;, \nn
\eea
where the second equivalence holds by symmetry in the exchange of particle labels and by uniform
convergence of the integrals. 
The integral in the last line of the formula is extended on a region which has positive volume 
for sufficiently small $k,$ and piecewise smooth boundary: the ball centered in $q_1$ minus 
the union of the balls centered in the points $q_2,\cdots,q_n,y_1,\cdots,y_{k-1}.$ 
Being the integrand function $\r_{n-1+k}(\cdots,y^*,\cdots,)$ continuous in the closure of its 
domain, it is easy to see that the gradient with respect to $q_1$ of such an integral is given by 
the surface integral of the restriction of the function over the part of the boundary of $B_d(q_1)$
that remains ouside the other balls, i.e. using the notations of (\ref{eq:Oi}) and (\ref{eq:hcgr}):
\be
d^{\nu-1} \int_{\O_1(q_1,\cdots,q_n,y_1,\cdots,y_{k-1})}d\o \o \r_{n-1+k}
(q_2,\cdots,q_n,q_1+d \o,y_1,\cdots,y_{k-1})\;.
\ee
Interchanging the integrations we find
\bea
&&\nabla_{q_1}\r_n(q_1,\cdots,q_n) \\
&&= d^{\nu-1} \int_{\O_1(q_1,\cdots,q_n)}d\o \o z\sum_{k=1}^{\infty}\frac{(-1)^k}{k!}k\nn\\
&&\ \ \ \ \ \cdot \int_{\RRR^{(k-1)\n}_d(q_2,\cdots,q_n,q_1+d\o)
\bigcap (B_d(q_1))^{k-1}} dy_1 \cdots dy_{k-1} \r_{n-1+k}(q_2,\cdots,q_n,q_1+d \o,y_1,\cdots,y_{k-1})\;.\nn
\eea
Using that (\ref{eq:hcKS}) holds also, by continuity, over the boundary of $\RRR^{\n n}_d,$
we recognize the function 
\be
- \r_{n+1}(q_1,q_2,\cdots,q_n,q_1+d\o)
\ee
in the above expression, so obtaining Eq. (\ref{eq:hcgr}). 
\end{proof}

%%%%%%%%%%%%%%%%%%%%%%%%%%%%%%%%%%%%%%%%%%%%%%%%
%%%%%%%%%%%%%%%%%%%%%%%%%%%%%%%%%%%%%%%%%%%%%%%%
%%%%%%%%%%%%%%%%%%%%%%%%%%%%%%%%%%%%%%%%%%%%%%%%
%%%%%%%%%%%%%%%%%%%%%%%%%%%%%%%%%%%%%%%%%%%%%%%%
%%%%%%%%%%%%%%%%%%%%%%%%%%%%%%%%%%%%%%%%%%%%%%%%

\section{Conclusions} \label{sec:conc}

We have studied the stationary BBGKY hierarchy of equations for infinite classical systems of
particles, assuming the usual Gaussian distribution of momenta.
We proved equivalence with the set of Kirkwood--Salsburg equations through
a constructive iterative method of integration. We extended the result of \cite{GV75}
to a larger class of potentials and states. We have stated partial results on the hard core hierarchy:
it would be interesting to understand if it is possible to work out a method of integration that 
can be applied directly to this last case.

We hope that the methods developed in this paper can help to handle different types of boundary 
condition and to understand different models of hierarchy of equations, such as those that can be
obtained replacing the Maxwellian assumption with a more general local equilibrium assumption
in which $\b$ and $z$ are position--dependent.
Notice that the formula Eq. (\ref{eq:KSq0}) in the proof of the main Theorem 
(derived without the use of properties of the state at infinity) has the remarkable form of a  
symmetric generalization of the Kirkwood--Salsburg equations, in which a free parameter $q_0$ is added. 
As already shown by Theorem \ref{thm:infcont}, this degree of freedom
may be used to impose different kinds of boundary condition. For instance, it would be of interest 
the study of the infinite hierarchy for a system of particles in a finite box: in this case even the equilibrium
problem has to be treated.

\bigskip

\ni{\bf Acknowledgments\\}
\ni The authors are grateful to Giovanni Gallavotti for suggesting the problem, for stimulating discussions
and encouragement. They also thank Alessandro Giuliani and Mario Pulvirenti for many useful discussions.

\bigskip

%%%%%%%%%%%%%%%%%%%%%%%%%%%%%%%%%%%%%%%%%%%%%%%%
%%%%%%%%%%%%%%%%%%%%%%%%%%%%%%%%%%%%%%%%%%%%%%%%
%%%%%%%%%%%%%%%%%%%%%%%%%%%%%%%%%%%%%%%%%%%%%%%%
%%%%%%%%%%%%%%%%%%%%%%%%%%%%%%%%%%%%%%%%%%%%%%%%
%%%%%%%%%%%%%%%%%%%%%%%%%%%%%%%%%%%%%%%%%%%%%%%%
%%%%%%%%%%%%%%%% APPENDICES %%%%%%%%%%%%%%%%%%%%%%%%

\section*{Appendix A. Positivity of the activity} \label{app:z}

In this appendix we check that the constant introduced by (\ref{eq:z}) in the proof of Theorem \ref{thm:main} 
is well defined and positive. We put $x_j = (y_j,p_j).$
By assumption (\ref{eq:Maxwell}) the denominator in (\ref{eq:z}) can be written as
\bea
&&1+\sum_{k=1}^{\infty}\frac{(-1)^k}{k!}
\int_{(B_R\times\RRR^\n)^k} dx_1\cdots dx_k \prod_{j=1}^k\Big(1-e^{-\b\f(y_j)}\Big)\ol\r_{k}(x_1,\cdots,x_k)\nn\\
&&\ \ \ + \sum_{k=1}^{\infty}\frac{(-1)^k}{k!}
\int_{(\RRR^\n\setminus B_R)^k} dy_1\cdots dy_k \prod_{j=1}^k\Big(1-e^{-\b\f(y_j)}\Big)\r_{k}(y_1,\cdots,y_k)\;,\label{eq:zexptot}
\eea
where $B_R$ is the ball centered in $0$ and with radius $R>0.$
Using the definition of correlation functions, Eq. (\ref{eq:defcf}), we can easily rewrite the first line as
\bea
&& 1+ \sum_{k=1}^{\infty}\sum_{p=1}^{k}\frac{(-1)^p}{p!(k-p)!}
\int_{(B_R\times\RRR^\n)^k} dx_1\cdots dx_k \prod_{j=1}^p\Big(1-e^{-\b\f(y_j)}\Big)\m_{B_R}^{(k)}(x_1,\cdots,x_k)\;.
\label{eq:poszexp}
\eea
Expanding the product, the integral in this expression is
\bea
&&\sum_{n=0}^p (-1)^n \int_{(B_R\times\RRR^\n)^k} dx_1\cdots dx_k \sum_{1\leq j_1<\cdots <j_n\leq p}
\left(\prod_{i=1}^n e^{-\b\f(y_{j_i})} \right)\m_{B_R}^{(k)}(x_1,\cdots,x_k)\nn\\
&& = \sum_{n=0}^p (-1)^n \binom{p}{n} C^{(k,n)}_{B_R}\;, \label{eq:poszint}
\eea
where the equality holds by symmetry of $\m_{B_R}^{(k)},$ with
\be
C^{(k,n)}_{B_R} := \int_{(B_R\times\RRR^\n)^k} dx_1\cdots dx_k 
\left(\prod_{i=1}^n e^{-\b\f(y_{i})} \right)\m_{B_R}^{(k)}(x_1,\cdots,x_k)\;.
\ee
Putting (\ref{eq:poszint}) into (\ref{eq:poszexp}) and interchanging the sums, we have
\bea
&&1+\sum_{k=1}^{\infty}\frac{1}{k!}\sum_{n=0}^k\frac{(-1)^n}{n!} C^{(k,n)}_{B_R}
\sum_{p=n}^{k}(-1)^p(1-\d_{p,0})\frac{k!}{(k-p)!(p-n)!}\nn\\
&& =1+\sum_{k=1}^{\infty}\frac{1}{k!}\sum_{n=0}^k\frac{1}{n!} C^{(k,n)}_{B_R}
\sum_{p=0}^{k-n}(-1)^p(1-\d_{p,0}\d_{n,0})\frac{k!}{(k-n-p)!p!} \nn\\
&& =1+\sum_{k=1}^{\infty}\frac{1}{k!}\sum_{n=1}^k\frac{1}{n!} C^{(k,n)}_{B_R}
k(k-1)\cdots(k-n+1)\d_{n,k}-\sum_{k=1}^{\infty}\frac{1}{k!}C^{(k,0)}_{B_R}\nn\\
&& =1+\sum_{k=1}^{\infty}\frac{1}{k!}C^{(k,k)}_{B_R}-
\sum_{k=1}^{\infty}\frac{1}{k!}C^{(k,0)}_{B_R}\nn\\
&& = \sum_{k=0}^{\infty}\frac{1}{k!}C^{(k,k)}_{B_R}\;,
\eea
having used the normalization condition, Eq. (\ref{eq:norm}), in the last step.
Condition (\ref{eq:norm}) implies also that this quantity is bounded away from zero uniformly in $R$
for $R$ larger than some $R_0>0.$ Since the term in the second line of (\ref{eq:zexptot}) is made 
arbitrarily small by taking $R$ large enough, the proof is complete.

%%%%%%%%%%%%%%%%%%%%%%%%%%%%%%%%%%%%%%%%%%%%%%
%%%%%%%%%%%%%%%%%%%%%%%%%%%%%%%%%%%%%%%%%%%%%%
%%%%%%%%%%%%%%%%%%%%%%%%%%%%%%%%%%%%%%%%%%%%%%
%%%%%%%%%%%%%%%%%%%%%%%%%%%%%%%%%%%%%%%%%%%%%%
%%%%%%%%%%%%%%%%%%%%%%%%%%%%%%%%%%%%%%%%%%%%%%

\section*{Appendix B. Integration of the hard rod hierarchy} \label{app:rod}

In this appendix we shall find the unique and {\em explicit} solution to the one--dimensional hard core hierarchy
({\em hard rod BBGKY hierarchy})
\bea
&& \frac{\partial\r_n}{\partial q_1}(q_1,\cdots,q_n) = \c(|q_1-a-q_i|\geq d) \r_{n+1}(q_1,\cdots,q_n,q_1-d)\nn\\
&&\ \ \ \ \ \ \ \ \ \ -\c(|q_1+a-q_i|\geq d)\r_{n+1}(q_1,q_2,\cdots,q_n,q_1+d)\;, \nn\\
&& (q_1,\cdots,q_n)\in\RRR^n_d\;,\nn\\
&& \c(\AA) = 1 \mbox{ if $\AA$ is verified and $0$ otherwise,} \label{eq:bbgky1d}
\eea
with the assumptions of invariance under translation and permutation of particles, 
sufficiently small $\r \equiv \r_1$ (precisely $\r < 1/d$), cluster property (\ref{eq:DIS}), 
continuity over $\ol\RRR^n_d,$ piecewise $C^1$ regularity on $\RRR^n_d$ and
boundedness of the derivative.
The special feature of this case is the existence of an explicit form for the equilibrium
correlation functions, e.g. \cite{LM66}. In what follows, we derive these expressions 
from the hierarchy by direct integration and without going through the corresponding
Kirkwood--Salsburg equations.

To do this, we can follow the procedure of \cite{GV75} in a rather natural 
way by ordering the particles from left to right: $q_i \leq q_{i+1}-d$; hence we start rewriting
\bea
&& \frac{\partial\r_n}{\partial q_1}(q_1,\cdots,q_n) = \r_{n+1}(q_1-d,q_1,\cdots,q_n)\nn\\
&&\ \ \ \ \ \ \ \ \ \ -\c(q_1 \leq q_2-2d)\r_{n+1}(q_1,q_1+d,q_2,\cdots,q_n)\;, \nn\\
&& q_i\in\RRR\;,\ \ \ \ \ q_i \leq q_{i+1}-d\;.
\eea
\ni Now we choose $q_0<<q_1$ and we integrate from $q_0$ to $q_1$:
\bea
&& \r_n(q_1,q_2,\cdots,q_n) = \r_n(q_0,q_2,\cdots,q_n) \nn\\
&& \ \ \ \ \ \ \ \ \ \ + \int_{q_0}^{q_1}d\bq\Big( \r_{n+1}(\bq-d,\bq,q_2\cdots,q_n)
-\c(\bq \leq q_2-2d)\r_{n+1}(\bq,\bq+d,q_2,\cdots,q_n)\Big) \nn\\
&& \ \ \ \ \ = \r_n(q_0,q_2,\cdots,q_n)+ \int_{q_0}^{q_{0}+d}d\bq \r_{n+1}(\bq-d,\bq,q_2\cdots,q_n)\nn\\
&& \ \ \ \ \ \ \ \ \ \ -\int_{q_1}^{q_{1}+d}d\bq\c(\bq \leq q_2-d)\r_{n+1}(\bq-d,\bq,q_2,\cdots,q_n)\;,
\label{eq:1dcrucial}
\eea
where we used again the symmetry in the particle labels to split the integral in the second equality.
\ni Sending $q_0$ to $- \infty$ gives
\bea
&&\r_n(q_1,q_2,\cdots,q_n) = \Big(\r+d\r_2(d)\Big)\r_{n-1}(q_2,\cdots,q_n)\nn\\
&&\ \ \ \ \ \ \ \ \ \ -\int_{q_1}^{q_{1}+d}d\bq\c(\bq \leq q_2-d)\r_{n+1}(\bq-d,\bq,q_2,\cdots,q_n)\;,\nn\\
&& \r_2(d) := \r_2(q,q+d)\;,
\eea
having used the cluster property and the translation invariance.

Call $R := \r+d\r_2(d)$. Iterating once the above equation we have
\bea
\r_n(q_1,q_2,\cdots,q_n) = R\Big[\r_{n-1}(q_2,\cdots,q_n)
-\int_{q_1}^{q_{1}+d}d\bq\c(\bq \leq q_2-d)\r_n(\bq,q_2,\cdots,q_n)\Big]\;. \label{eq:1dh}
\eea

\ni We stress again that the above explained procedure does not lead directly to the Kirkwood--Salsburg equations.
The extracted constant $R$ is different from the activity of the hard rod gas (which is known
to be given by $z= Re^{Rd}$, see for instance \cite{LM66}). 
Nevertheless the set of equations (\ref{eq:1dh}) can be solved explicitly for every $n$, 
starting from $n=2$ (the equation for $n=1$ is of course useless in this model), as we show below.
Actually, the simple structure of Eq. (\ref{eq:1dh}) allows to construct easily $\r_n$ from $\r_{n-1}:$
this structure is due to the strong symmetry used to split the integral in the second equality of (\ref{eq:1dcrucial}),
and it seems to have no analogue in higher dimensions.

We start with the $n=2$ case. Call $x = |q_2-q_1|$, $x\geq d$. Formula (\ref{eq:1dh}) implies
$\r_2(d) = \frac{\r^2}{1-\r d},$ $R=\frac{\r}{1-\r d}$ and
\bea
&&\frac{d\r_2}{dx}(x) = -R\r_2(x)\ \ \ \ \ \ \ \ \ \ d<x<2d\nn\\
&&\frac{d\r_2}{dx}(x) = R\Big(-\r_2(x)+\r_2(x-a)\Big)\ \ \ \ \ \ \ \ \ \ 2d<x\;.
\eea
Solving these set of equations iteratively in the intervals $(kd,(k+1)d), k=1,2,\cdots,$ using the continuity
assumption, leads to
\bea
\r_2(x) = \r\sum_{k=1}^{[x/d]}\Big(\frac{\r}{1-\r d}\Big)^k \frac{(x-kd)^{k-1}}{(k-1)!}
e^{-\frac{(x-kd)\r}{1-\r d}}\;. \label{eq:r2x}
\eea

In a similar way, using \eqref{eq:1dh} and \eqref{eq:r2x} and proceeding by induction on $n$, one finds
that the solution of (\ref{eq:bbgky1d}) for $n\geq 2$ is 
\bea
\r_n(q_1,\cdots,q_n) = \frac{1}{\r^{n-2}}\prod_{j=1}^{n-1}\r_2(q_j,q_{j+1})\;. \label{eq:rn1d}
\eea
% 

%%%%%%%%%%%%%%%%%%%%%%%%%%%%%%%%%%%%%%%%%%%%%%
%%%%%%%%%%%%%%%%%%%%%%%%%%%%%%%%%%%%%%%%%%%%%%
%%%%%%%%%%%%%%%%%%%%%%%%%%%%%%%%%%%%%%%%%%%%%%
%%%%%%%%%%%%%%%%%%%%%%%%%%%%%%%%%%%%%%%%%%%%%%
%%%%%%%%%%%%%%%%%%%%%%%%%%%%%%%%%%%%%%%%%%%%%%

\section*{Appendix C. The method of \cite{GV75}} \label{app:GV}

In this appendix we discuss the method established in \cite{GV75} for the integration of the hierarchy
(\ref{eq:grad-rho}), pointing out an error in the formula for the activity and sketching how to correct it
(we refer to \cite{Si11} for details). 
At the end of the section we make comparisons between this method and the one established in the present paper.

We will need somewhat stronger assumptions than those of Theorem \ref{thm:main},
namely the potential is a function $\f\in C^1(\RRR^\n)$ which is radial, stable and with compact support,
while the smooth Maxwellian state has positional correlation functions $\r_n\in C^1(\RRR^{\n n})$\ \footnote{The 
assumptions on the smoothness of $\f$ could be released as done in Section \ref{sec:setup}, by using Equations
(\ref{eq:Maxwell})--(\ref{eq:gradrnbound}).}with the following properties:
\bea
&& \mbox{{\em a)\ \ }}\r_n \leq \xi^n, |\nabla_{q_i}\r_n|\leq C_n\xi^n, \mbox{ with $\xi$ small enough;}\nn\\
&& \mbox{{\em b)\ \ }}\r_n \mbox{ is translation and rotation invariant;}\nn\\
&& \mbox{{\em c)\ \ }}\r_n \mbox{ satisfies an exponential strong cluster property, i.e.}\nn\\
&&\ \ \ \ \ \ \ \ \ \ |\r_n^{T}(q_1,\cdots,q_n)| \leq (C\xi)^n e^{-\kappa|q_1-q_n|}\;,\ \ \ \ \ C, \kappa > 0\;,
\label{eq:GVass}
\eea
where the {\em truncated correlation functions} $\r_n^T$ can be defined by
\[
\left\{
\begin{array}{l}
\r_2^{T}(q_1,q_2) = \r_2(q_1,q_2) - \r(q_1)\r(q_2)\;,  \\
\r_3^{T}(q_1,q_2,q_3) = \r_3(q_1,q_2,q_3) - \r_2(q_1,q_2)\r(q_3)-\r(q_1)\r_2(q_2,q_3)+\r(q_1)\r(q_2)\r(q_3)\;,\\
\end{array}
\right.
\]
etc. (see \cite{DIS74} and \cite{GV75}, page 279).

Let us start by integrating Eq. (\ref{eq:grad-rho}) along a straight line connecting $q_0$ (arbitrary) to $q_1:$ using the 
same notations introduced for (\ref{eq:inthat}), we have
\bea
&&\r_n(q_1,\cdots,q_n) = e^{-\b\left(W_{q_1}(q_2,\cdots,q_n)-W_{q_0}(q_2,\cdots,q_n)\right)}
\r_{n}(q_0,q_2,\cdots,q_n) \label{eq:GVbah}\\
&&\ \ \ \ \ +\int_{q_0}^{q_1} d\bq_1\int_{\RRR^\n} dy_1 \frac{\partial(-\b\f(\bq_1-y_1))}{\partial\bq_1}
e^{-\b\left(W_{q_1}(q_2,\cdots,q_n)-W_{\bq_1}(q_2,\cdots,q_n)\right)}\r_{n+1}(\bq_1,q_2,\cdots,q_n,y_1)\;,\nn
\eea
which is nothing but a rewriting of Eq. (\ref{eq:inthat}). In the assumption {\em b)}, the case $n=1$ is a trivial identity, hence
we shall assume $n\geq 2$ in the following.

The strategy consists in {\em taking the limit as $|q_0|\rightarrow +\infty$ right away, before iteration of formulas}. 
This is also the essential difference with respect to the method discussed in Section \ref{sec:mr}, where such a limit is taken
at the very end of the proof, after infinitely many iterations. Using {\em c)}, from (\ref{eq:GVbah}) we get
\bea
&&\r_n(q_1,\cdots,q_n) = \r e^{-\b W_{q_1}(q_2,\cdots,q_n)}\r_{n-1}(q_2,\cdots,q_n)\label{eq:GV7ex}\\
&&\ \ \ \ \ + e^{-\b W_{q_1}(q_2,\cdots,q_n)}
\int_{-\infty}^{q_1} d\bq_1\int_{\RRR^\n} dy_1 \frac{\partial(-\b\f(\bq_1-y_1))}{\partial\bq_1}
e^{\b W_{\bq_1}(q_2,\cdots,q_n)}\r_{n+1}(\bq_1,q_2,\cdots,q_n,y_1)\;,\nn
\eea
where the double integral in the second term on the right hand side is well defined
(by the exponential clustering (\ref{eq:GVass}) and the assumed rotation symmetry of the potential and of $\r_2$),
though not absolutely convergent. Interchanging the integrations we find
\bea
&&\r_n(q_1,\cdots,q_n) = (\r-\g)e^{-\b W_{q_1}(q_2,\cdots,q_n)}\r_{n-1}(q_2,\cdots,q_n)\label{eq:GV7}\\
&&\ \ \ \ \ + e^{-\b W_{q_1}(q_2,\cdots,q_n)} \int_{\RRR^\n} dy_1
\int_{-\infty}^{q_1} d\bq_1 \frac{\partial(-\b\f(\bq_1-y_1))}{\partial\bq_1}
e^{\b W_{\bq_1}(q_2,\cdots,q_n)}\r_{n+1}(\bq_1,q_2,\cdots,q_n,y_1)\;,\nn 
\eea
where we put
\bea
&&\g =  \int_{\RRR^\n} dy_1 \int_{-\infty}^{q_1} d\bq_1
\r_2(\bq_1,y_1) \frac{\partial(-\b\f(\bq_1-y_1))}{\partial\bq_1}\nn\\
&& \equiv \int_{B(q_1)} dy_1 \int_{-\infty}^{q_1} d\bq_1
\r_2(\bq_1,y_1) \frac{\partial(-\b\f(\bq_1-y_1))}{\partial\bq_1}\;. 
\eea
Here $B(q)$ is the ball centered in $q$ and with radius equal to the range of $\f,$ and the second equality 
is again true by the rotation symmetry. 

The authors in \cite{GV75} proceed by iteration of formula (\ref{eq:GV7}). Call for simplicity 
\be
\z = \r -\g\;.
\ee
The first iteration gives
\bea
&&\r_n(q_1,\cdots,q_n) = \z e^{-\b W_{q_1}(q_2,\cdots,q_n)}\Big[\r_{n-1}(q_2,\cdots,q_n)\label{eq:GV81} \\
&&\ \ \ \ \ \ \ \ \ \ - \int_{\RRR^\n} dy_1(1-e^{-\b\f(q_1-y_1)})\r_{n}(q_2,\cdots,q_n,y_1)\Big]\nn\\
&&\ \ \ \ \ +e^{-\b W_{q_1}(q_2,\dots,q_n)}\int_{\RRR^\n} dy_1\int_{-\infty}^{q_1} d \bar{q}_1\int_{\RRR^\n} dy_2 \int_{-\infty}^{\bar{q}_1} d\bar{q}_2  \prod_{j=1}^{2}\left( \frac{\partial}{\partial\bar{q}_j}(1-e^{-\b\f(\bar{q}_j-y_j)})\right) \nn\\
&&\ \ \ \ \ \ \ \ \ \ \cdot e^{\b W_{\bar{q}_{2}}(q_2,\dots,q_n,y_1,y_2)}\r_{n+2}(\bar{q}_{2},q_2,\dots,q_n,y_1,y_2)\;.\nn
\eea
If we iterate once again (\ref{eq:GV7}) in (\ref{eq:GV81}), the last term of (\ref{eq:GV81})
becomes
\bea
&& \z e^{-\b W_{q_1}(q_2,\dots,q_n)}\int_{\RRR^\n} dy_1\int_{-\infty}^{q_1} d \bar{q}_1\int_{\RRR^\n} 
dy_2 \int_{-\infty}^{\bar{q}_1} d\bar{q}_2 \prod_{j=1}^{2}
\left( \frac{\partial}{\partial\bar{q}_j}(1-e^{-\b\f(\bar{q}_j-y_j)})\right) \label{eq:GV82}\\
&&\ \ \ \ \ \ \ \ \ \  \cdot \r_{n+1}(q_2,\dots,q_n,y_1,y_2) \nn\\
&& \ \ \ \ \ -e^{-\b W_{q_1}(q_2,\dots,q_n)}\int_{\RRR^\n} dy_1\int_{-\infty}^{q_1} d \bar{q}_1\int_{\RRR^\n} 
dy_2 \int_{-\infty}^{\bar{q}_1} d\bar{q}_2\int_{\RRR^\n} dy_3 \int_{-\infty}^{\bar{q}_2} d\bar{q}_3  \nn\\
&&\ \ \ \ \ \ \ \ \ \ \cdot\prod_{j=1}^{3}\left( \frac{\partial}{\partial\bar{q}_j}(1-e^{-\b\f(\bar{q}_j-y_j)})\right)
e^{\b W_{\bar{q}_{3}}(q_2,\dots,q_n,y_1,y_2,y_3)}\r_{n+3}(\bar{q}_{3},q_2,\dots,q_n,y_1,y_2,y_3)\nn\;,
\eea
which is not equal to the formula in step 8) of \cite{GV75} with $N=2$, since the first term of (\ref{eq:GV82})
is not equal to $\z e^{-\b W_{q_1}(q_2,\dots,q_n)}$ times
\bea
&&\frac{1}{2}\int_{\RRR^\n} dy_1\int_{\RRR^\n} dy_2 
(1-e^{-\b\f(q_1-y_1)})(1-e^{-\b\f(q_1-y_2)}) \r_{n+1}(q_2,\dots,q_n,y_1,y_2)\;.\label{eq:KS2}
\eea
In fact, integrating in $\bar q_2$ the first term of (\ref{eq:GV82}), we have $\z e^{-\b W_{q_1}(q_2,\dots,q_n)}$ times
\be
\int_{\RRR^\n} dy_1  \int_{\RRR^\n} dy_2 \int_{-\infty}^{q_1} d \bar q_1 (1-e^{-\b\f(\bar q_1-y_2)})
\frac{\partial(1-e^{-\b\f(\bar q_1 -y_1)})}{\partial\bar q_1}\r_{n+1}(q_2,\dots,q_n,y_1,y_2)\label{eq:m}\\
\ee
and, since the integrals in $y_1$ and $y_2$ are not interchangeable, integration by parts of this formula
does not lead to the single desired term (\ref{eq:KS2}), in spite of the symmetry of $\r_{n+1}$. We refer to 
\cite{Si11} for a proof of the last assertion.

The additional terms missing in the formula in step 8) of \cite{GV75}, obtained by repeated iteration of (\ref{eq:GV7}),
give higher order corrections to the constant $\z$, and all these (infinitely many) corrections lead to a definition of 
the activity. 
A rather convenient way to modify the proof (leaving essentially unchanged the procedure)
in order to obtain the correct expression for the activity, is to keep the inverse order of integration in formula 
(\ref{eq:GV7}): that is to iterate Eq. (\ref{eq:GV7ex}). This computation is reported in detail in \cite{Si11}.
The convergence of the iteration procedure is handled in assumptions 
{\em a)}, {\em b)} and {\em c)}: after $N$ iterations one gets a remainder $R_{n,N}$ that can be bounded as
\be
|R_{n,N}(q_1,\dots,q_n)| \leq (A\xi)^{n+N+1}\;, \label{eq:GVremterr}
\ee
where $A$ is a suitable constant depending on $\b, C, \kappa,\f$ and the configuration $q_1,\cdots,q_n$ (but not on $\xi$), 
so that it goes to zero when $N\rightarrow\infty$ if $\xi$ is small enough. For these values of $\xi,$
the method provides convergence to the Kirkwood--Salsburg equations with exponential rate.

{\bf Comparisons.} We have the following differences with respect to the method presented in Section~\ref{sec:mr}:
\begin{enumerate}
\item The rate of convergence of the iteration is exponential, Eq. (\ref{eq:GVremterr}) (instead of factorial,
see (Eq. (\ref{eq:rem})): this implies convergence for sufficiently small values of $\xi.$ 
\item The radius of convergence of the procedure is at least $1/A,$ where $A$ is not uniformly bounded in the maximum 
of the potential; a bound which is uniform in the hard core limit could be obtained by assuming estimates for
the smooth state as those in (\ref{eq:rnbound})--(\ref{eq:gradrnbound}), instead of {\em a)} of (\ref{eq:GVass}).
\item The exponential strong cluster property (\ref{eq:GVass}) (instead of the weak cluster (\ref{eq:DIS})), 
as well as the short range assumption on the potential (instead of the weak decrease (\ref{eq:decrease})), 
are needed to control the convergence of the integrals over the unbounded domains of integration;
\item the same can be said for the assumption {\em b)} in (\ref{eq:GVass}), that is for the rotation invariance of the state, 
which is not needed in Theorem \ref{thm:main}. 
Furthermore we shall notice that, in the proof of Theorem \ref{thm:main}, translation invariance is only used in the very 
last step, i.e. to perform the limit $|q_0|\rightarrow +\infty$ of expression (\ref{eq:KSq0}), obtained after infinitely 
many iterations. This makes the method suitable to extend the result to situations in which there
is no symmetry and different kinds of boundary condition are considered; an example of such a situation has been
given in Theorem~\ref{thm:infcont}.
\end{enumerate}


\begin{thebibliography}{999999}


\bibitem{AGGLM} M. Aizenman, S. Goldstein, C. Gruber, J. L. Lebowitz, P. Martin:
``On the Equivalence between KMS-States and Equilibrium States for Classical Systems'',
{\em Comm. Math. Phys.} \textbf{53},  209-220 (1977).

\bibitem{Bo} N. N. Bogolyubov: {\em Problemy dinamicheskoi teorii $\n$ statisticheskoi fizike},
Gostekhizdat, Moscow -- Leningrad, {\bf 13}, 196 (1946).

\bibitem{Cerc75} C. Cercignani: {\em Theory and application of the Boltzmann Equation}, 
Scottish Academic Press, Edinburgh and London (1975).

\bibitem{Cohen} E. G. D. Cohen:
``The kinetic theory of dilute gases'', in: {\em Transport phenomena in fluids}, ed. H. J. M. Hanley,
119-155 (1969).

\bibitem{DIS74} M. Duneau, D. Iagolnitzer, B. Souillard: ``Strong Cluster Properties for Classical Systems
with Finite Range Interaction'', {\em Comm. Math. Phys.} 
\textbf{35},  307-320 (1974).

\bibitem{Ga68} G. Gallavotti: ``On the mechanical equilibrium equations'' 
{\em Il Nuovo Cimento} \textbf{57},  B: 208-211 (1968).

\bibitem{GaSM} G. Gallavotti: {\em Statistical Mechanics. A short treatise}, Springer Verlag, Berlin, 2000.

\bibitem{GV75} G. Gallavotti, E. Verboven. ``On the classical KMS boundary condition'', 
{\em Il Nuovo Cimento} \textbf{28},  B: 274-286 (1975).

\bibitem{Gr62} J. Groeneveld: ``Two theorems on classical many particles systems'', 
{\em Physics Letters} \textbf{3},  50-51 (1962).

\bibitem{GS76} B. M. Gurevich, Ju. M. Suhov: ``Stationary Solutions of the Bogoliubov Hierarchy
Equations in Classical Statistical Mechanics. 1--4'', {\em Comm. Math. Phys.} 
\textbf{49},  307-320,  \textbf{54},  81-96, \textbf{56},  225-236, \textbf{84},  333-376 (1976-82).

\bibitem{IP} R. Illner, M. Pulvirenti: ``A derivation of the BBGKY-hierarchy for hard spheres 
particle systems'', {\em Transport Theory and Stat. Phys.} \textbf{16},  997-1012 (1985).

\bibitem{L59} J. L. Lebowitz: ``Stationary Nonequilibrium Gibbsian Ensembles'', 
{\em Physical Review} \textbf{114},  5 (1959).

\bibitem{LM66} E. H. Lieb, D. C. Mattis: {\em Mathematical physics in one dimension}, 
Academic Press, New York (1966).

\bibitem{Mo55} C. B. Morrey: ``On the derivation of the equation of hydrodynamics from statistical mechanics'', 
{\em Comm. Pure and Appl. Math.} \textbf{8},  279-326 (1955).

\bibitem{Pe63a} O. Penrose: ``Convergence of Fugacity Expansions for Fluids and Lattice Gases'',
{\em J. Math. Phys.} \textbf{4},  1312-1320 (1963).

\bibitem{Pe63b} O. Penrose: ``The Remainder in Mayer's Fugacity Series'', 
{\em J. Math. Phys.} \textbf{4}, 12,  1488-1494 (1963).

\bibitem{Ru63} D. Ruelle: ``Correlation Functions of Classical Gases'', 
{\em Ann. Phys.} \textbf{25},  109-120, (1963).

\bibitem{Ru67} D. Ruelle: ``States of Classical Statistical Mechanics'', 
{\em J. Math. Phys.} \textbf{8}, 1657-1668 (1962).

\bibitem{Ru69} D. Ruelle: {\em Statistical Mechanics. Rigorous results}, W. A. Benjamin Inc., New York (1969).

\bibitem{Si11} S. Simonella: {\em BBGKY Hierarchy for Hard Sphere Systems}, Ph.D. dissertation, 
Sapienza University of Rome (2011), Sec. 7.3 and Appendix D. 




\end{thebibliography}
\end{document}